\documentclass[sigconf,natbib=true,screen,anonymous=false]{acmart}
\usepackage{xspace,balance,tabularx,multirow}
\usepackage{flushend}
\usepackage{tikz}
\usepackage{pgfplots}
\usepgfplotslibrary{groupplots}
\pgfplotsset{compat=1.16}
\usetikzlibrary{patterns}
\usepackage{subfig}
\usepackage[ruled, vlined, linesnumbered]{algorithm2e}
\usepackage{xcolor}
\usepackage{colortbl}
\usepackage{bbold}
\SetKwComment{Comment}{$\triangleright$\ }{}
\usepackage{enumitem}
\usepackage{tablefootnote}
\usepackage{upgreek,textgreek}
\usepackage{pifont}%
\usepackage[noabbrev]{cleveref}
\usepackage{titlecaps}
\usepackage{lipsum}
\usepackage{makecell}

\pgfplotsset{every tick label/.append style={font=\tiny}}

\newlength{\starsize}
\newlength{\starspread}
\tikzset{starsize/.code={\setlength{\starsize}{#1}},
         starspread/.code={\setlength{\starspread}{#1}}}
\tikzset{starsize=1mm,
         starspread=3mm}
\pgfdeclarepatternformonly[\starspread,\starsize]%
  {my fivepointed stars}%
  {\pgfpointorigin}%
  {\pgfqpoint{\starspread}{\starspread}}%
  {\pgfqpoint{\starspread}{\starspread}}%
  {%
   \pgftransformshift{\pgfqpoint{\starsize}{\starsize}}
   \pgfpathmoveto{\pgfqpointpolar{18}{\starsize}}
   \pgfpathlineto{\pgfqpointpolar{162}{\starsize}}
   \pgfpathlineto{\pgfqpointpolar{306}{\starsize}}
   \pgfpathlineto{\pgfqpointpolar{90}{\starsize}}
   \pgfpathlineto{\pgfqpointpolar{234}{\starsize}}
   \pgfpathclose%
   \pgfusepath{fill}
  }

\newcommand{\argmax}[1]{\underset{#1}{\operatorname{arg}\,\operatorname{max}}\;}

\makeatletter
\newcommand*\bigcdot{\mathpalette\bigcdot@{.5}}
\newcommand*\bigcdot@[2]{\mathbin{\vcenter{\hbox{\scalebox{#2}{$\m@th#1\bullet$}}}}}
\makeatother

\newcommand{\stitle}[1]{\vspace*{0.5em}\noindent{\bf #1.\/}}

\newcommand{\U}{\mathcal{U}\xspace}
\newcommand{\V}{\mathcal{V}\xspace}

\newcommand{\G}{\mathcal{G}\xspace}
\newcommand{\N}{\mathcal{N}\xspace}
\newcommand{\EDG}{\mathcal{E}\xspace}
\newcommand{\C}{\mathcal{C}\xspace}
\newcommand{\LN}{\mathcal{L}\xspace}

\newcommand{\AM}{\mathbf{A}\xspace}

\newcommand{\DM}{\mathbf{D}\xspace}

\newcommand{\YM}{\mathbf{Y}\xspace}

\newcommand{\UM}{\mathbf{U}\xspace}
\newcommand{\VM}{\mathbf{V}\xspace}

\newcommand{\ZM}{\mathbf{Z}\xspace}

\newcommand{\uvec}{\vec{\mathbf{u}}\xspace}
\newcommand{\vvec}{\vec{\mathbf{v}}\xspace}

\newcommand{\fvec}{\vec{\mathbf{w}}\xspace}

\newcommand{\BM}{\mathbf{B}\xspace}

\newcommand{\algo}{\texttt{BACO}\xspace}

\newcommand{\bigzero}{\mbox{\normalfont\large\bfseries 0}}
\newcommand{\rvline}{\hspace*{-\arraycolsep}\vline\hspace*{-\arraycolsep}}

\newcommand{\eat}[1]{}

\newenvironment{customlegend}[1][]{%
    \begingroup
    \csname pgfplots@init@cleared@structures\endcsname
    \pgfplotsset{#1}%
}{%
    \csname pgfplots@createlegend\endcsname
    \endgroup
}%

\def\addlegendimage{\csname pgfplots@addlegendimage\endcsname}

\makeatletter
\newcommand\footnoteref[1]{\protected@xdef\@thefnmark{\ref{#1}}\@footnotemark}
\makeatother

\let\oldnl\nl%
\newcommand{\nonl}{\renewcommand{\nl}{\let\nl\oldnl}}%

\SetKwComment{Comment}{/* }{ */}

\makeatletter 
\g@addto@macro{\@algocf@init}{\SetKwInOut{Parameter}{Parameters}} 
\makeatother

\definecolor{myred}{HTML}{fd7f6f}
\definecolor{myred_new}{HTML}{D8D8D8}
\definecolor{myred_new2}{HTML}{D7191C}
\definecolor{myblue}{HTML}{7eb0d5}
\definecolor{mygreen}{HTML}{b2e061}
\definecolor{mypurple}{HTML}{bd7ebe}
\definecolor{myorange}{HTML}{ffb55a}
\definecolor{myyellow}{HTML}{ffee65}
\definecolor{mypurple2}{HTML}{beb9db}
\definecolor{mypink}{HTML}{fdcce5}
\definecolor{mycyan}{HTML}{8bd3c7}

\definecolor{myblue2}{HTML}{115f9a}
\definecolor{myred2}{HTML}{c23728}

\definecolor{NSCcol1}{HTML}{1f77b4}
\definecolor{NSCcol2}{HTML}{aec7e8}
\definecolor{NSCcol3}{HTML}{ff7f0e}
\definecolor{NSCcol4}{HTML}{ffbb78}
\definecolor{NSCcol5}{HTML}{98df8a}

\AtBeginDocument{%
  \providecommand\BibTeX{{%
    \normalfont B\kern-0.5em{\scshape i\kern-0.25em b}\kern-0.8em\TeX}}}

\setcopyright{acmcopyright}
\copyrightyear{2018}
\acmYear{2018}
\acmDOI{XXXXXXX.XXXXXXX}
\acmISBN{978-1-4503-XXXX-X/18/06}

\begin{document}

\title{Balanced Co-Clustering of Users and Items for Embedding Table Compression in Recommender Systems}
\subtitle{Technical Report}
\author{Runhao Jiang}
\affiliation{%
  \institution{Hong Kong Baptist University}
  \country{Hong Kong SAR, China}
}
\email{csrhjiang@comp.hkbu.edu.hk}
\orcid{0009-0000-4841-9175}

\author{Renchi Yang}
\authornote{Corresponding Author}
\affiliation{%
  \institution{Hong Kong Baptist University}
  \country{Hong Kong SAR, China}
}
\email{renchi@hkbu.edu.hk}
\orcid{0000-0002-7284-3096}

\author{Donghao Wu}
\authornote{Work done while an intern at HKBU}
\affiliation{%
  \institution{The Chinese University of Hong Kong}
  \country{Shenzhen, China}
}
\email{donghaowu@link.cuhk.edu.cn}
\orcid{0009-0002-4389-5329}

\renewcommand{\shortauthors}{Runhao Jiang, Renchi Yang, \& Donghao Wu}

\begin{abstract}

Recommender systems have advanced markedly over the past decade by transforming each user/item into a dense embedding vector with deep learning models.
At industrial scale, embedding tables constituted by such vectors of all users/items demand a vast amount of parameters and impose heavy compute and memory overhead during training and inference,
hindering model deployment under resource constraints.
Existing solutions towards embedding compression either suffer from severely compromised recommendation accuracy or incur considerable computational costs.

To mitigate these issues, this paper presents \algo{}, a fast and effective framework for compressing embedding tables. 
Unlike traditional ID hashing, \algo{} is built on the idea of exploiting collaborative signals in user-item interactions for user and item groupings, such that similar users/items share the same embeddings in the codebook.
Specifically, we formulate a balanced co-clustering objective that maximizes intra-cluster connectivity while enforcing cluster-volume balance, 
and unify canonical graph clustering techniques into the framework through rigorous theoretical analyses.
To produce effective groupings while averting codebook collapse, \algo{} instantiates this framework with a principled weighting scheme for users and items, an efficient label propagation solver, as well as secondary user clusters.
Our extensive experiments comparing \algo{} against full models and 18 baselines over benchmark datasets demonstrate that \algo{} cuts embedding parameters by over 75\% with a drop of at most 1.85\% in recall, while surpassing the strongest baselines by being up to $346\times$ faster.

\end{abstract}

\begin{CCSXML}
<ccs2012>
   <concept>
       <concept_id>10002951.10003227.10003351.10003444</concept_id>
       <concept_desc>Information systems~Clustering</concept_desc>
       <concept_significance>300</concept_significance>
       </concept>
   <concept>
       <concept_id>10002951.10003317.10003347.10003350</concept_id>
       <concept_desc>Information systems~Recommender systems</concept_desc>
       <concept_significance>500</concept_significance>
       </concept>
   <concept>
       <concept_id>10002950.10003624.10003633.10010917</concept_id>
       <concept_desc>Mathematics of computing~Graph algorithms</concept_desc>
       <concept_significance>300</concept_significance>
       </concept>
 </ccs2012>
\end{CCSXML}

\ccsdesc[300]{Information systems~Clustering}
\ccsdesc[500]{Information systems~Recommender systems}
\ccsdesc[300]{Mathematics of computing~Graph algorithms}
\keywords{Recommender Systems, Embedding Table Compression, Co-clustering}

\maketitle

\section{Introduction}

In modern recommender systems, deep learning models~\cite{zhang2019deep} have become the go-to approach for recommendations, typically working with embedding tables that map each category feature value, e.g., the ID of a user or item, to a unique vector representation.
However, in industrial-scale scenarios, these embedding tables consume a vast amount of parameters, often reaching hundreds of GB or even TB, due to the need to represent billions of users/items~\cite{guo2021scalefreectr}.
For instance, embedding tables deployed in Baidu's advertising systems may occupy up to 10 TB of storage~\cite{xu2021agile}, while at Meta’s scale, i.e., 3 billion monthly active users worldwide~\cite{gupta2020architectural}, the user embedding table in a recommender model can easily reach 715 GB of space for 64-dimensional vectors in 32-bit floating point.
As a consequence, the expansion of embedding tables intensifies storage and hardware demands, thus impeding model scalability and deployment in real production environments~\cite{zhang2020model,coleman2023unified}.

In recent years, {\em embedding table compression} (ETC)~\cite{li2024embedding}, which seeks to reduce the sizes of embedding tables with minimal degradation in precision, has emerged as a popular choice in industry~\cite{wu2020saec,wu2025graphhash,tsang2023clustering,chen_clustered_2023} and is gaining increasing traction in recommender systems~\cite{zhang2020model,coleman2023unified,chen2018learning}. Most existing works towards ETC generally follow three compression paradigms: {\em pre-training}, {\em in-training}, and {\em post-training} compressions, where the former generates the mappings for users/items in embedding tables before training recommendation models, while the latter two strategies compress the full embeddings when they are partially or fully learned.
Despite being effective, the in-training/post-training scheme additionally introduces considerable training overhead, and still suffers from substantial memory footprint.

A common treatment for pre-training ETC is to hash the user/item IDs down to a set of buckets with a manageable size through hashing functions~\cite{coleman2023unified,ghaemmaghami2022learning,weinberger2009feature,zhang2020model}.
The users/items mapped to the same bucket will share the same embedding vectors in the reduced embedding tables (a.k.a. {\em codebooks}).
Although this methodology is computationally efficient, it relies on hashing that is essentially random, which will represent unrelated users/items by the same embeddings, and hence, engendering severe {\em embedding collisions} and performance degradation~\cite{ghaemmaghami2022learning,zhang2020model}.
Subsequent studies alleviate embedding collision issues through the (i) employment of multiple hash functions~\cite{shi2020compositional,zhang2020model}, (ii) incorporation of feature and frequency statistics~\cite{coleman2023unified,ghaemmaghami2022learning,Zhang2018LSH}, and (iii) {\em learned hashing} techniques~\cite{tsang2023clustering,liang2024lightweight}.
These methods either still incur subpar recommendation performance, or require additional user/item information or training costs.
Most importantly, the collaborative signals underlying the user-item interactions are largely overlooked for ETC. Very recently, \citet{wu2025graphhash} made an attempt to exploit the collaborative information of users/items by reframing the ETC task as a graph clustering problem. They simply apply the highly-efficient \texttt{Louvain} algorithm~\cite{blondel2008fast} over the user-item interaction graph to derive clusters of users/items as their buckets, which surprisingly yield remarkable improvements in recommendation quality. Unfortunately, \texttt{Louvain} suffers from an inherent drawback of {\em resolution limit}~\cite{kim2022abc}, leading to suboptimal results.

Inspired by the efficacy of leveraging collaborative signals, we make further inroads by presenting \algo{}, a fast and effective solution for ETC via \underline{BA}lanced \underline{CO}-clustering.
Specifically, we first reveal two factors, i.e., {\em intra-cluster connectivity} and {\em cluster-size balance}, that are crucial to clustering for ETC, through an empirical study, which in essence correspond to the embedding collision and {\em codebook collapse}~\cite{tsang2023clustering}, respectively.
Based thereon, we develop a theoretically-grounded balanced co-clustering framework aiming at optimizing these two factors, and establish its theoretical connections to classic graph clustering algorithms.
Building on the framework, we propose a well-thought-out {\em hybrid weighting scheme} (HWS) for users and items, and develop a fast optimization solver through local greedy {\em label propagation}~\cite{raghavan2007near} to circumvent the resolution limit.
To account for the multiple and evolving interests of users, \algo{} further constructs {\em secondary clusters for users} (SCU) rapidly based upon the primary clusters, thereby overcoming the representation limitation of each user without inducing additional space overhead.
Our comprehensive empirical evaluations and ablation studies manifest that (i) our \algo{} can consistently achieve markedly superior recommendation quality over 18 ETC baselines, while often being highly efficient, and (ii) our proposed HWS and SCU strategies can conspicuously enhance recommendation accuracies even when working with other clustering baselines.

\section{Related Work}

We review existing studies on ETC in the sequel and defer reviews on co-clustering and bipartite graph clustering to Appendix~\ref{sec:add-rw}.

\stitle{Hashing-based Methods}
Hashing methods offer a straightforward and efficient means of compressing embedding tables by mapping IDs into a smaller index space using hash functions. For instance, the basic hashing method~\cite{weinberger2009feature} simply employs random function to achieve compression. Although efficient, this approach may introduce collisions. 
~\cite{Zhang2018LSH} increases the probability that colliding objects are similar through \texttt{LSH}, while ~\cite{shi2020compositional,zhang2020model} employ double hashing combined with other techniques to mitigate collisions.
In contrast, \texttt{ROBE}~\cite{desai2022ROBE} employs a more flexible indexing mechanism within compositional embeddings.
Although these hash methods are generally efficient, their weak association with the data makes it difficult to maintain accuracy.

\stitle{Vector Quantization}
Vector quantization (VQ) maps original embeddings to their most similar meta-embeddings, thereby bridging hash representations and the original data~\cite{zhangsurveyitem26}. \texttt{Saec}~\cite{wu2020saec} and \texttt{MGQE}~\cite{kang2020learning} utilize feature frequency to guide the quantization of embeddings. Furthermore, \texttt{xLightFM}~\cite{jiang2021xlightfm} proposes allocating various numbers of embeddings to different feature categories. 
Within multi-codebook frameworks, \texttt{LightRec}~\cite{lian2020lightrec} and \texttt{LISA}~\cite{wu2021LISA} aim to differentiate codebooks using distinct mechanisms. In contrast to earlier methods that focus on low-dimensional dense embeddings, recent work \texttt{CompresSAE}~\cite{kasalicky2025CompresSAE} introduces a radically different strategy by converting original embeddings into high-dimensional but sparse representations, combining expressive capacity with compression.
Vector quantization introduces significant computational and memory overhead during training, primarily due to the nearest neighbor search and the requirement to retain original embeddings.

\stitle{Decomposition-based Methods}
Decomposition methods perform soft selection, where each feature aggregates all codebook embeddings using a real-valued index vector.
\texttt{DHE}~\cite{kang2021DHE} decomposes the embedding table via hash functions and neural networks, whereas \texttt{ANT}~\cite{liang2020ANT} uses anchor vectors and sparse transformation. Recently, tensor train decomposition (TTD) has been employed to enhance compression by expressing multidimensional data as a product of smaller tensors~\cite{WangQinyong2020LLRec,yin2021TTRec,XiaXin2022STTD}. However, achieving such high compression ratios entails additional computational costs during decomposition and lookup, potentially slowing down inference.

\section{Preliminaries}

\subsection{Symbols and Terminology}\label{sec:symbol}
\stitle{Symbols} 
We model the interactions between a user set $\U=\{u_1,u_2,\ldots,u_{|\U|}\}$ and an item set $\V=\{v_1,v_2,\ldots,v_{|\V|}\}$ as a {\em bipartite network} $\G=(\U\cup \V,\EDG)$, where the edge set $\EDG\subseteq \U\times \V$ consists of all the user-item interactions.
The neighbors of user $u_i$ (resp. item $v_j$) are represented by $\N(u_i)$ (resp. $\N(v_j)$). 
We denote by $\BM\in \{0,1\}^{|\U|\times |\V|}$ the bi-adjacency matrix of $\G$, where $\BM_{i,j}=1$ if user $u_i$ has interacted with item $v_j$, i.e., $(u_i,v_j)\in \EDG$, and 0 otherwise. We use $\footnotesize\textstyle {\AM} = \begin{pmatrix}
\bigzero
  & \rvline & \begin{matrix} \BM \end{matrix} \\
\hline
   \hspace{0.7em}\BM^{\top} & \rvline &
  \begin{matrix}
\bigzero
  \end{matrix}
\end{pmatrix}\in \{0,1\}^{(|\U|+|\V|)\times (|\U|+|\V|)}$ to represent the complete adjacency matrix of $\G$.
The degree of user $u_i$ (resp. item $v_j$) is symbolized by $d(u_i)$ (resp. $d(v_j)$) and the diagonal degree matrix is denoted by $\DM \in \mathbb{R}^{(|\U|+|\V|)\times (|\U|+|\V|)}$.

Throughout this paper, we use $\uvec_i$ (resp. $\vvec_j$) to denote the embedding vector of user $u_i$ (item $v_j$).
For notation convenience, $\UM \in\mathbb{R}^{|\U|\times d}$ and $\VM \in\mathbb{R}^{|\V|\times d}$ represent the embedding vectors of users in $\U$ and items in $\V$, respectively, wherein the $i$-th row of $\UM$ (resp. $\VM$) is $\UM_i=\uvec_i$ (resp. $\VM_i=\vvec_i$). In recommender systems, $\UM$ and $\VM$ are often referred to as embedding tables of users and items, respectively.
Given a cluster $\C_k$ containing users and items, we define $\U_k=\C_k\cap\U$ and $\V_k=\C_k\cap\V$.
We use $[K]$ to denote the set of integers $\{1,2,\ldots,K\}$. Table~\ref{tbl:symbol} lists frequently used notions in our paper.

\stitle{Modularity and CPM}
The {\em modularity}~\cite{newman2004finding} quantifies the goodness of a particular division of a network. Formally, given a unipartite network $\G$ and clusters $\{\C_1,\C_2,\ldots,\C_K\}$, the modularity is defined by $\footnotesize\textstyle \frac{1}{|\EDG|}\sum_{k=1}^K{\left( s_k - \gamma\cdot\frac{\left(\sum_{v_i\in \C_k}{d(v_i)}\right)^2}{|\EDG|}\right)}$,
where 
$\gamma>0$ stands for a resolution parameter~\cite{reichardt2006statistical}, and $s_k=|\{(v_i,v_j)\in \EDG|v_i,v_j\in \C_k\}|$ is the actual number of edges within cluster $\C_k$. 
Intuitively speaking, ${\left(\sum_{v_i\in \C_k}{d(v_i)}\right)^2}/{|\EDG|}$ can be interpreted as the expected number of edges in cluster $\C_k$.
In particular, higher (resp. lower) resolutions lead to more (resp. fewer) communities, and a larger modularity value indicates a better partitioning of the network $\G$.
~\citet{barber2007modularity} extended modularity to bipartite networks, formulating {\em bipartite modularity} as follows: 
\begin{small}
\begin{equation}\label{eq:bmod}
\frac{1}{|\EDG|}\sum_{k=1}^K{\left( s_k - \gamma\cdot\frac{\sigma^{(u)}_k\cdot \sigma^{(v)}_k}{|\EDG|}\right)},
\end{equation}
\end{small}
where $\sigma^{(u)}_k=\sum_{u_i\in \U_k}{d(u_i)}$ (resp. $\sigma^{(v)}_k=\sum_{v_i\in \V_k}{d(v_i)}$) denotes the sums of degrees of nodes in $\C_k$ and set $\U$ (resp. $\V$).

\begin{table}[!t]
\centering
\renewcommand{\arraystretch}{0.9}
\begin{footnotesize}
\caption{Frequently used symbols.}\vspace{-3mm} \label{tbl:symbol}
\resizebox{\columnwidth}{!}{%
\begin{tabular}{|p{0.51in}|p{2.5in}|}
\hline
{\bf Symbol}  &  {\bf Description}\\
\hline
$\U,\V, \EDG$   & The user set, item set and edge set.\\ \hline
$\scriptstyle |\U|, |\V|, |\EDG|$ & The number of users, items, and all the edges.\\ \hline
$u_i, v_i$ & A user in $\U$, and a item in $\V$.\\ \hline
$\G, \BM$ & The bipartite network and its bi-adjacency matrix.\\ \hline
$\AM$ & The complete adjacency matrix of $\G$. \\ \hline
$\DM$ & The diagonal degree matrix of $\AM$. \\ \hline
$\UM,\VM$ & The embedding tables of users and items. \\ \hline
$\uvec_i, \vvec_j$ & The embedding vector of user $u_i$ and item $v_j$.\\ \hline
$\C_k$ & A cluster containing users and items. \\ \hline
$\U_k,\V_k$ & The user set and the item set of cluster $\C_k$. \\ \hline
$[K]$ & The set of integers $\{1,2,\ldots,K\}$.\\ \hline
$\ZM^{(u)},\ZM^{(v)}$ & The compressed embedding table for users and items. \\ \hline
$K^{(u)},K^{(v)}$ & The numbers of clusters for users and items.\\ \hline
$\YM^{(u)},\YM^{(v)}$ & The sketching matrices of users and items. \\ \hline
$w_i^{(u)},w_j^{(v)}$ & The weights for user $u_i$ and item $v_j$.\\ \hline
$\fvec$ & The vector containing the weights of users and items.\\ \hline
$W^{(u)},W^{(v)}$ & The total weight of the users and items in $\G$.\\ \hline
$\gamma$ & The coefficients for objective term in Eq.~\eqref{eq:overall-obj} \\ \hline
\end{tabular}%
}
\end{footnotesize}
\vspace{-2ex}
\end{table}

An alternative quality function like the modularity is called the {\em Constant Potts Model} (CPM)~\cite{traag2011narrow}, whose mathematical formulation is defined by $\sum_{k=1}^K{\left( s_k - \gamma \cdot \tbinom{|\C_k|}{2}\right)}$,
which can also be extended to bipartite networks by substituting $|\U_k|\cdot|\V_k|$ for $\tbinom{|\C_k|}{2}$, i.e.,
$\sum_{k=1}^K{\left( s_k - \gamma \cdot |\U_k|\cdot|\V_k|\right)}$.
In the literature~\cite{traag2019louvain}, both quality functions can be efficiently and effectively optimized by the prominent \texttt{Louvain} algorithm invented by~\citet{blondel2008fast}.

\subsection{Problem Statement}\label{sec:problem}
Let $\ZM^{(u)}\in \mathbb{R}^{K^{(u)}\times d}$ (resp. $\ZM^{(v)}\in \mathbb{R}^{K^{(v)}\times d}$) be the {\em codebook} (a.k.a. compressed embedding table) for users (resp. items), where $K^{(u)}\ll |\U|$, $K^{(v)}\ll |\V|$, and each row vector stands for an embedding vector.
The {\em embedding table compression}\footnote{We follow the ``pre-training'' compression setting for parameter efficiency~\cite{tsang2023clustering}.} (ETC) in recommender systems aims to find a mapping $f$: $\U\rightarrow [K^{(u)}]$ and a mapping $h$: $\V\rightarrow [K^{(v)}]$ such that each user $u_i$ (resp. item $v_j$) can be represented by an embedding vector $\ZM^{(u)}_{f(i)}$ (resp. $\ZM^{(v)}_{h(j)}$).

The foregoing mappings $f$ and $h$ can be equivalently represented by {\em sketching matrices} $\YM^{(u)} \in \{0,1\}^{|\U|\times K^{(u)}}$ and $\YM^{(v)} \in \{0,1\}^{|\V|\times K^{(v)}}$, respectively. Each row $\YM^{(u)}_i$ (resp. $\YM^{(v)}_j$) therein is a one-hot vector, usually called ``sketch of user $i$ (resp. item $v_j$)''~\cite{tsang2023clustering}. Accordingly, $\UM = \YM^{(u)}\ZM^{(u)}$ and $\VM = \YM^{(v)}\ZM^{(v)}$. 

Notice that given sketching matrices $\YM^{(u)}$ and $\YM^{(v)}$, recommendation models only need to learn codebook $\ZM^{(u)}$ and $\ZM^{(v)}$ in the course of training.
As such, the space overhead originally incurred for storing the embedding tables $\UM$ and $\VM$ in the recommender systems is reduced from $O((|\U|+|\V|)\cdot d)$ to $O(|\U|+|\V| + (K^{(u)}+K^{(v)})\cdot d)$ for mappings and codebooks. Given a space budget $B$ for the codebooks, i.e., the total number of embedding vectors in codebooks for users and items, the ETC task ensures $K^{(u)}+K^{(v)}\le B$. 

In model training, given a user $u_i$ (resp. item $v_j$), the compressed embedding $\ZM^{(u)}_{f(i)}$ (resp. $\ZM^{(v)}_{h(j)}$) is first retrieved before entering into training. For example, the predicted rating for user $u_i$ and item $v_j$ is obtained as $\hat{y}_{i,j}=\langle \ZM^{(u)}_{f(i)},\ZM^{(v)}_{h(j)} \rangle$, and the loss function like $\mathcal{L}_{\textnormal{BPR}} = -\sum_{u_i\in \U}{\sum_{v_k\in \N(u)}{\sum_{v_j\notin \N(u)}}{\ln{\sigma(\hat{y}_{i,k}-\hat{y}_{i,j})}}} + \lambda \cdot \left(\|\UM\|^2 + \|\VM\|^2\right)$, is computed with $\UM = \YM^{(u)}\ZM^{(u)}$ and $\VM = \YM^{(v)}\ZM^{(v)}$ accordingly.

\subsection{Construction of Sketching Matrices}\label{sec:SM}
There are generally three categories of approaches for constructing the sketching matrices $\YM^{(u)}$ and $\YM^{(v)}$: {\em random sketching}, {\em learned sketching}, and {\em clustering-based sketching}.

Random sketching simply employs hashing functions as mappings $f$: $\U\rightarrow [K^{(u)}]$ and $h$: $\V\rightarrow [K^{(v)}]$. Each user or item is assigned to a random sketch (a random one-hot vector)~\cite{weinberger2009feature}. 
Subsequent works~\cite{shi2020compositional,zhang2020model,desai2022ROBE,tito2017hash} resort to compositional embeddings, which basically apply two or more hashing functions to the IDs of users and items, such that each user or item is associated with two or more embeddings in the codebook, which can be combined via summation, element-wise multiplication, or concatenation, etc. 
Note that this approach yields denser sketching matrices as each row $\YM^{(u)}_i$ and $\YM^{(v)}_j$ comprise multiple non-zero entries.

Instead of generating random sketches, the learned sketching~\cite{tsang2023clustering,liang2024lightweight} methodology learn the sketch of each user or item by training models to reconstruct sketching matrices.
After every few epochs, sketching matrices $\YM^{(u)}$ and $\YM^{(v)}$ are regenerated upon current embeddings $\YM^{(u)}\ZM^{(u)}$ and $\YM^{(v)}\ZM^{(v)}$ to achieve a better fit.

By interpreting sketching matrices $\YM^{(u)}$ and $\YM^{(v)}$ as node-cluster indicator matrices, e.g., $\YM^{(u)}_{i,k}=1$ if $u_i$ belongs to the $k$-th user cluster and $0$ otherwise, the rationale of clustering-based sketching is to group ``similar'' users in $\U$ and items in $\V$ into $K^{(u)}$ and $K^{(v)}$ clusters, respectively.
\texttt{GraphHash}~\cite{wu2025graphhash} implements this idea by clustering users and items over the bipartite network structure $\G$, attaining markedly superior effectiveness over random sketching and significantly higher efficiency than learned sketching.

\begin{figure}[!t]
\centering
\begin{small}
\begin{tikzpicture}
   \hspace{3mm}\begin{customlegend}[
        legend entries={\textsf{Random},\textsf{Frequency},\textsf{GraphHash}, \textsf{LP},\textsf{EBMD},\textsf{SCC},\textsf{SBC},\algo{}}, 
        legend columns=4,
        legend style={at={(0.45,1.25)},anchor=north,draw=none,font=\footnotesize,column sep=0.25cm}]
        \addlegendimage{only marks, mark=*, mark size=2pt, color=mypurple2}   
        \addlegendimage{only marks, mark=*, mark size=2pt, color=NSCcol3}  
        \addlegendimage{only marks, mark=*, mark size=2pt, color=NSCcol1} 
         \addlegendimage{only marks, mark=*, mark size=2pt, color=cyan}   
        \addlegendimage{only marks, mark=*, mark size=2pt, color=mypurple}     
        \addlegendimage{only marks, mark=*, mark size=2pt, color=myblue}     
        \addlegendimage{only marks, mark=*, mark size=2pt, color=myred}    
        \addlegendimage{only marks, mark=*, mark size=2pt, color=mycyan}
    \end{customlegend}
\end{tikzpicture}
\\[-8pt]
\vspace{-1mm}

\hspace{-2mm}
\subfloat[{\em \#cross-cluster links}]{
\begin{tikzpicture}[scale=1]
\begin{axis}[
    height=3.2cm,
    width=3.7cm,
    enlarge x limits=0.15,
    ylabel={\em Recall@20},
    xtick={0.2,0.3,0.4,0.5},
    xticklabels={$0.02$,$0.03$,$0.04$,$0.05$},
    xmin=0.18,
    xmax=0.49,
    ymin=8,
    ymax=17,
    ytick={8,10,12,14,16},
    xticklabel style = {font=\scriptsize},
    yticklabel style = {font=\scriptsize},
    every axis y label/.style={at={(current axis.north west)},right=5mm,above=0mm},
    legend style={draw=none, at={(1.02,1.02)},anchor=north west,cells={anchor=west},font=\scriptsize},
    legend image code/.code={ \draw [#1] (0cm,-0.1cm) rectangle (0.3cm,0.15cm);},
    ]

\addplot [only marks, mark=*, mark size=2pt, color=mypurple2] coordinates {(0.48764,9.214)}; 
\addplot [only marks, mark=*, mark size=2pt, color=NSCcol3] coordinates {(0.48761,8.574)}; 
\addplot [only marks, mark=*, mark size=2pt, color=NSCcol1] coordinates {(0.41615,15.325)}; 
\addplot [only marks, mark=*, mark size=2pt, color=mypurple] coordinates {(0.47449,10.313)};
\addplot [only marks, mark=*, mark size=2pt, color=cyan] coordinates {(0.17117,11.749)};
\addplot [only marks, mark=*, mark size=2pt, color=myblue] coordinates {(0.44415,13.792)}; 
\addplot [only marks, mark=*, mark size=2pt, color=myred] coordinates {(0.48881,10.458)}; 
\addplot [only marks, mark=*, mark size=2pt, color=mycyan] coordinates {(0.331479,16.393)}; 
\addplot [ domain=0.16:0.50, samples=100, color=NSCcol1, very thick, dotted ] { -845.28*x^3 + 600.96*x^2 - 107.15*x +16.717 };
\draw [red, thick] (axis cs:0.482,9.65) ellipse [x radius=0.04, y radius=1.55];
\end{axis}
\end{tikzpicture}\hspace{0mm}\label{fig:CorssEdge}%
}%
\subfloat[{\em Gini coefficient (users)}]{
\begin{tikzpicture}[scale=1]
\begin{axis}[
    height=3.2cm,
    width=3.8cm,
    enlarge x limits=0.15,
    ylabel={\em Recall@20},
    xmin=0,
    xmax=1.0,
    xtick={0,0.2,0.4,0.6,0.8,1},
    ymin=8,
    ymax=17,
    ytick={8,10,12,14,16},
    xticklabel style = {font=\scriptsize},
    yticklabel style = {font=\scriptsize},
    every axis y label/.style={at={(current axis.north west)},right=5mm,above=0mm},
    legend style={draw=none, at={(1.02,1.02)},anchor=north west,cells={anchor=west},font=\scriptsize},
    legend image code/.code={ \draw [#1] (0cm,-0.1cm) rectangle (0.3cm,0.15cm);},
    ]

\addplot [only marks, mark=*, mark size=2pt, color=mypurple2] coordinates {(0.0254,9.214)};    
\addplot [only marks, mark=*, mark size=2pt, color=NSCcol3] coordinates {(0.430,8.574)}; 
\addplot [only marks, mark=*, mark size=2pt, color=NSCcol1] coordinates {(0.3824,15.325)}; 
\addplot [only marks, mark=*, mark size=2pt, color=mypurple] coordinates {(0.3831,10.313)};
\addplot [only marks, mark=*, mark size=2pt, color=cyan] coordinates {(0.8852,11.749)} ; 
\addplot [only marks, mark=*, mark size=2pt, color=myblue] coordinates {(0.6473,13.792)}; 
\addplot [only marks, mark=*, mark size=2pt, color=myred] coordinates {(0.6305,10.458)};  
\addplot [only marks, mark=*, mark size=2pt, color=mycyan] coordinates {(0.70574,16.393)}; 
\draw [red, thick] (axis cs:0.35, 9.68) ellipse [x radius=0.4, y radius=1.7];
\end{axis}
\end{tikzpicture}\hspace{4mm}\label{fig:Gini-user}%
}%
\subfloat[{\em Gini coefficient (items)}]{
\begin{tikzpicture}[scale=1]
\begin{axis}[
    height=3.2cm,
    width=3.8cm,
    enlarge x limits=0.15,
    ylabel={\em Recall@20},
    xmin=0,
    xmax=1.0,
    xtick={0,0.2,0.4,0.6,0.8,1},
    ymin=8,
    ymax=17,
    ytick={8,10,12,14,16},
    xticklabel style = {font=\scriptsize},
    yticklabel style = {font=\scriptsize},
    every axis y label/.style={at={(current axis.north west)},right=5mm,above=0mm},
    legend style={draw=none, at={(1.02,1.02)},anchor=north west,cells={anchor=west},font=\scriptsize},
    legend image code/.code={ \draw [#1] (0cm,-0.1cm) rectangle (0.3cm,0.15cm);},
    ]

\addplot [only marks, mark=*, mark size=2pt, color=mypurple2] coordinates {(0.0078,9.214)};     
\addplot [only marks, mark=*, mark size=2pt, color=NSCcol3] coordinates {(0.4478,8.574)}; 
\addplot [only marks, mark=*, mark size=2pt, color=NSCcol1] coordinates {(0.3753,15.325)}; 
\addplot [only marks, mark=*, mark size=2pt, color=mypurple] coordinates {(0.3961,10.313)};
\addplot [only marks, mark=*, mark size=2pt, color=cyan] coordinates {(0.8870,11.749)} ; 
\addplot [only marks, mark=*, mark size=2pt, color=myblue] coordinates {(0.6438,13.792)} ; 
\addplot [only marks, mark=*, mark size=2pt, color=myred] coordinates {(0.6536,10.458)}; 
\addplot [only marks, mark=*, mark size=2pt, color=mycyan] coordinates {(0.6954,16.393)}; 
\draw [red, thick] (axis cs:0.35, 9.68) ellipse [x radius=0.41, y radius=1.7];
\end{axis}
\end{tikzpicture}\hspace{1mm}\label{fig:Gini-item}%
}%

\vspace{-3mm}
\end{small}
\caption{Recommendation performance (Recall@20) v.s. averaged \#cross-cluster links and Gini coefficients on {\em Gowalla}} \label{fig:preliminary} 
\vspace{-3ex}
\end{figure}

\stitle{Empirical Study} To gain deeper insights, we empirically study the correlation between final recommendation performance and the characteristics of sketching matrices produced by various methods from the perspective of clustering. Specifically, we generate sketching matrices using hashing-based methods (\texttt{Random} and \texttt{Frequency}) and clustering-based methods (\texttt{GraphHash}, \texttt{LP}, \texttt{EMBD}, \texttt{SCC}, \texttt{SBC}, and our proposed \texttt{BACO}) on the {\em Gowalla} dataset. Figure~\ref{fig:preliminary} plots the recall@20 scores ($y$-axis) achieved by the above methods, and their respective average number of cross-cluster links and the Gini coefficients of their user and item cluster sizes ($x$-axis). Particularly, fewer cross-cluster links connections suggest that related users/items are well grouped, implying rare embedding collisions. 
A low Gini coefficient signals that clusters are nearly equal in size, indicating codebook embeddings are used evenly across users or items (i.e., mild codebook collapse). 

From Figure~\ref{fig:CorssEdge}, when inter-cluster connectivity is excessively high (see methods circled in red), recommendation performance suffers regardless of whether codebook usage (measured by the Gini coefficient) is uniform or uneven (Figures~\ref{fig:Gini-user} and ~\ref{fig:Gini-item}).
In contrast, \texttt{LP} generates clusters of users and items with a minimal number of cross-cluster links, but causes high Gini coefficients that indicate severe embedding collision and codebook collapse, and hence, result in poor recommendations.
\texttt{GraphHash} and \algo{} obtain much better recall scores by balancing these two factors.
From the foregoing observations, we can derive the following insight: {\em the underlying clusters of high-quality sketching matrices should minimize inter-cluster connectivity while avoiding imbalanced cluster sizes.}

\section{Methodology}
In this section, we first present our framework \algo{} for co-clustering users and items in \S~\ref{sec:framework} and unify classic graph clustering algorithms into this framework through rigorous theoretical analyses in \S~\ref{sec:theoretical}. Building on this framework, we propose the {\em hybrid weighting scheme} (HWS) specialized for users and items in the context of e-commerce in \S~\ref{sec:node-weight}, followed by constructing co-clusters via an efficient optimization solver in \S~\ref{sec:LP}.
Lastly, to alleviate the embedding collision and codebook collapse issues in co-clusters, \S~\ref{sec:post} introduces a simple but effective approach to create {\em secondary clusters for users} (SCU).

\subsection{The Balanced Co-Clustering Framework}\label{sec:framework}
\begin{table}[!t]
\centering
\caption{The unified balanced co-clustering framework.}\label{tbl:framework}
\vspace{-2ex}
\renewcommand{\arraystretch}{0.8}
\small
\begin{tabular}{lcccc}
\toprule
{\bf Method} & $\gamma$ & $w^{(u)}_i$ & $w^{(v)}_j$ & Opt. solver \\
\midrule
\texttt{Louvain} (Modularity)~\cite{blondel2008fast} & $>0$ & $\frac{d(u_i)}{\sqrt{|\EDG|}}$ & $\frac{d(v_j)}{\sqrt{|\EDG|}}$ & \texttt{Louvain} \\
\texttt{Louvain} (CPM) & $>0$ & $1$ & $1$ & \texttt{Louvain} \\
\texttt{Leiden} (Modularity)~\cite{traag2019louvain} & $>0$ & $\frac{d(u_i)}{\sqrt{|\EDG|}}$ & $\frac{d(v_j)}{\sqrt{|\EDG|}}$ & \texttt{Louvain} \\
\texttt{Leiden} (CPM)~\cite{traag2019louvain} & $>0$ & $1$ &  $1$ & \texttt{Louvain} \\
\texttt{LP}~\cite{raghavan2007near} & 0 & - & - & \texttt{LP} \\
\texttt{LPAb}~\cite{barber2009detecting} & $>0$ & $\frac{d(u_i)}{\sqrt{|\EDG|}}$ & $\frac{d(v_j)}{\sqrt{|\EDG|}}$  & \texttt{LP} \\ 
\texttt{SCC}~\cite{dhillon2001co} & 0 & - & - & Eigensolver \\ \hline
Our \algo{} & $>0$ & $\frac{d(u_i)}{\sqrt{|\EDG|}}$ & $\frac{1}{\sqrt{|\V|}}$ & \algo{} \\
\bottomrule
\end{tabular}
\end{table}

We frame the construction of sketching matrices $\YM^{(u)}$ and $\YM^{(v)}$ as co-clustering users and items in $\G$ into $K$ disjoint co-clusters $\{\C_1,\ldots,\C_K\}$. Particularly, we represent the node-cluster memberships by an indicator matrix $\YM\in \{0,1\}^{(|\U|+|\V|)\times K}$, where $\YM_{i,k}=1$ if user $u_i\in \C_k$, $\YM_{|\U|+j,k}=1$ if item $v_j\in \C_k$, and $0$ otherwise. Note that $\YM$ can be converted into $\YM^{(u)}$ and $\YM^{(v)}$
by mapping the clusters of users and items into consecutive column indices, respectively.

\stitle{Maximizing Intra-cluster Connectivity} The first objective in our \algo{} aims to identify clusters $\{\C_1,\ldots,\C_K\}$ such that the {\em intra-cluster connectivity} is maximized. 
More concretely, users and items that are densely connected via interactions in $\EDG$ should fall into the same clusters since it connotes a high correlation or similarity between users (resp. items). This leads to the maximization of the number of connections within the same clusters, i.e., intra-cluster connectivity, which can be formulated as follows:
\begin{equation}\label{eq:max-intra}
\max_{\C_1,\ldots,\C_K}\sum_{k=1}^K\sum_{u_i, v_j\in \C_k}{\BM_{i,j}}.
\end{equation}
The objective can be further transformed into a trace maximization problem with indicator matrix $\YM$ and adjacency matrix $\AM$ of $\G$:
\begin{align}
& \max_{\C_1,\ldots,\C_K} \sum_{k=1}^K\sum_{u_i, v_j\in \C_k}{\BM_{i,j}} \Leftrightarrow \max_{\C_1,\ldots,\C_K}\sum_{k=1}^K\sum_{u_i, v_j\in \C_k}{\AM_{i,|\U|+j}} \notag \\
& \Leftrightarrow \max_{\YM}\sum_{k=1}^K\sum_{u_i\in \U}\sum_{v_j\in \V}{\YM_{i,k}\cdot \AM_{i,|\U|+j}\cdot \YM_{|\U|+j,k}} \notag \\
& \Leftrightarrow \max_{\YM}\sum_{k=1}^K{(\YM^\top\AM\YM)_{k,k}} \Leftrightarrow \max_{\YM} \mathsf{Trace}(\YM^\top\AM\YM).\label{eq:trace-max}
\end{align}
Such an optimization task is equivalent to the mincut problem for graph partitioning in the literature \cite{stoer1997simple}, which is proved NP-hard. 

\stitle{Weighted Exclusive Lasso for Size Balance}
As remarked earlier in \S~\ref{sec:SM}, in overly imbalanced clusters, substantial users/items with low relevance are likely to be grouped together, which yields severe embedding collision and codebook collapse, and thus, degrades performance.
As a remedy, in \algo{}, we additionally include a weighted exclusive lasso to balance the sizes/volumes of the $K$ clusters.

Specifically, instead of treating all users and items equally, we assign a weight $w^{(u)}_i$ (resp. $w^{(v)}_j$) to each user $u_i$ (item $v_j$).
Let $W^{(u)}=\sum_{u_i\in \U}{{w^{(u)}_i}}$ and $W^{(v)}=\sum_{v_j\in \V}{{w^{(v)}_j}}$ be the total weight of the users and items in $\G$, respectively. Accordingly, we define the volume of a cluster $\C_k$ as 
\begin{equation}\label{eq:vol}
\textsf{vol}(C_k) = \sum_{u_i\in \U_k}{w^{(u)}_i} + \sum_{v_j\in \V_k}{{w^{(v)}_j}},
\end{equation}
which is a summation of the weights of the users and items therein.

Ideally, the volumes of all clusters should be comparable to each other, i.e., minimizing the weighted exclusive lasso:
\begin{gather}
\min_{C_1,C_2,\ldots,C_K}{\sum_{k=1}^K\left(\textsf{vol}(C_k)-\frac{W^{(u)}+W^{(v)}}{K}\right)^2}
\Leftrightarrow \min_{C_1,C_2,\ldots,C_K}{\|\fvec^\top\YM\|_2^2} \notag \\
\Leftrightarrow  \min_{C_1,C_2,\ldots,C_K}{\mathtt{Trace}(\YM^\top\fvec\fvec^\top\YM)}.\label{eq:lasso}
\end{gather}
where $\fvec \in\mathbb{R}^{|\U|+|\V|}$ is a vector containing the weights of users and items, e.g., $\fvec_i=w^{(u)}_i$ for user $u_i\in \U$ and $\fvec_{|\U|+j}=w^{(v)}_j$ for item $v_j\in \V$.

\stitle{Overall Objective} Combining the foregoing intra-cluster connectivity maximization (Eq.~\eqref{eq:trace-max}) and weighted exclusive lasso minimization (Eq.~\eqref{eq:lasso}) leads to the overall objective of \algo{}:
\begin{equation}\label{eq:overall-obj}
\max_{\YM}\mathtt{Trace}(\YM^\top\AM\YM) - \gamma\cdot \mathtt{Trace}(\YM^\top\fvec\fvec^\top\YM),
\end{equation}
where the coefficient $\gamma$ can be used to strengthen the weight of the cluster size balance term.

\subsection{Theoretical Insights}\label{sec:theoretical}
Next, we conduct a theoretical investigation of \algo{} framework, so as to (i) establish its connections to several classic clustering algorithms over bipartite graphs, and (ii) uncover its indirect impact on the downstream user and item embeddings $\UM$ and $\VM$.

\stitle{Connections to Other Clustering Methods}
We define the Kronecker $\delta$ function as follows:
\begin{equation}
\delta(u_i, v_j) = \begin{cases} 1 & \text{if } u_i,v_j \in \C_k, \text{ i.e., } \YM_i = \YM_{|\U|+j}, \\
0 & \text{otherwise}. \end{cases}
\end{equation}
The first intra-cluster connectivity term in Eq.~\eqref{eq:overall-obj} can be equivalently expressed by
\begin{align}
& \mathtt{Trace}(\YM^\top\AM\YM) = \sum_{k=1}^K{(\YM^\top\AM\YM)_{k,k}} 
 = \sum_{k=1}^K\sum_{u_i, v_j\in \C_k}{\BM_{i,j}} \notag\\
 & = \sum_{u_i\in \U}\sum_{v_j\in \V}{\BM_{i,j}\cdot \delta(u_i,v_j)}  = \sum_{u_i\in \U} \sum_{v_j\in \N(u_i)}{\BM_{i,j}\cdot \delta(u_i,v_j)}. \label{eq:trace-to-sum}
\end{align}
Analogously, the second term $-\gamma\cdot\mathtt{Trace}(\YM^\top\fvec\fvec^\top\YM)$ is equal to
$$-\gamma\cdot\sum_{u_i\in \U} \sum_{v_j\in \V}{\fvec_i\cdot\fvec_{|\U|+j}\cdot \delta(u_i,v_j)}.$$
Along this line, our overall objective in Eq.~\eqref{eq:overall-obj} can be rewritten as
\begin{align}
\max_{\YM}\sum_{u_i\in \U} \sum_{v_j\in \V}{\left(\BM_{i,j}-\gamma\cdot w^{(u)}_i\cdot w^{(v)}_j\right)\cdot \delta(u_i,v_j)}.\label{eq:overall-obj-new}
\end{align}

Together with the above formulation, we are enabled to unify existing clustering algorithms tailored to bipartite graphs including the modularity- and CPM-based \texttt{Louvain}~\cite{blondel2008fast} and \texttt{Leiden}~\cite{traag2019louvain}, Label Propagation (\texttt{LP})~\cite{raghavan2007near}, Spectral Co-clustering (\texttt{SCC})~\cite{dhillon2001co}, and \texttt{LPAb}~\cite{barber2009detecting} into our \algo{} framework, as summarized in Table~\ref{tbl:framework}, based on their respective optimization functions theoretically analyzed in the following lemmata.

\begin{lemma}\label{lem:louvain}
Given bipartite graph $\G$, the adoption of \texttt{Louvain}, \texttt{Leiden}, or \texttt{LPAb} over $\G$ can maximize the following form of the bipartite modularity: $\sum_{u_i\in \U} \sum_{v_j\in \V}{\left(\BM_{i,j}-\gamma\cdot \frac{d(u_i)\cdot d(v_j)}{|\EDG|}\right)\cdot \delta(u_i,v_j)}$, or the following form of the CPM: $\sum_{u_i\in \U} \sum_{v_j\in \V}{\left(\BM_{i,j}-\gamma\right)\cdot \delta(u_i,v_j)}$.
\end{lemma}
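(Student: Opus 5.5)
The plan is to show that the objective each algorithm optimizes, once specialized to a bipartite graph, equals (up to a positive constant factor, which does not change the maximizer) one of the two displayed sums. Then \texttt{Louvain}, \texttt{Leiden} and \texttt{LPAb}, being (heuristic) maximizers of those objectives, maximize the stated forms. The argument runs in three steps, all built on the rewriting already used for Eq.~\eqref{eq:overall-obj-new}.

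\textbf{Step 1: the edge term.} I decompose the intra-cluster edge count of a partition. As in Eq.~\eqref{eq:trace-to-sum}, $\sum_k s_k=\sum_{u_i\in\U}\sum_{v_j\in\V}\BM_{i,j}\,\delta(u_i,v_j)$. This identity needs one fact: since $\G$ is bipartite, every edge inside $\C_k$ joins some $u_i\in\U_k$ to some $v_j\in\V_k$.

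\textbf{Step 2: the penalty term.} For bipartite modularity in Eq.~\eqref{eq:bmod}, I expand the product of degree sums:
$$\sigma^{(u)}_k\sigma^{(v)}_k=\sum_{u_i\in\U_k}\sum_{v_j\in\V_k}d(u_i)d(v_j).$$
Summing over $k$ gives $\sum_{u_i\in\U}\sum_{v_j\in\V}d(u_i)d(v_j)\,\delta(u_i,v_j)$. Multiplying Eq.~\eqref{eq:bmod} by the constant $|\EDG|>0$ then yields exactly
$$\sum_{u_i\in\U}\sum_{v_j\in\V}\Bigl(\BM_{i,j}-\gamma\,\tfrac{d(u_i)d(v_j)}{|\EDG|}\Bigr)\delta(u_i,v_j).$$
For CPM in its bipartite extension, the same expansion applies with weights $1$: $|\U_k|\cdot|\V_k|=\sum_{u_i\in\U_k}\sum_{v_j\in\V_k}1$. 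This gives $\sum_{u_i\in\U}\sum_{v_j\in\V}(\BM_{i,j}-\gamma)\,\delta(u_i,v_j)$. Both forms are instances of Eq.~\eqref{eq:overall-obj-new}, with $w^{(u)}_i=d(u_i)/\sqrt{|\EDG|}$, $w^{(v)}_j=d(v_j)/\sqrt{|\EDG|}$ in the first case and unit weights in the second.

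\textbf{Step 3: matching each algorithm to its objective.}
\begin{itemize}
\item \texttt{Louvain} and \texttt{Leiden} greedily increase the chosen quality function through local node moves and aggregation; \texttt{Leiden} adds a refinement step. Either quality function can be plugged in, per \cite{traag2019louvain}. Running them on $\G$ with the bipartite null model therefore optimizes the forms from Step 2.
\item \texttt{LPAb} \cite{barber2009detecting} is label propagation constrained by Barber's bipartite modularity. Each label update of a user $u_i$ picks the label $k$ maximizing $\sum_{v_j\in\V_k}\bigl(\BM_{i,j}-\gamma d(u_i)d(v_j)/|\EDG|\bigr)$, and symmetrically for items. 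This quantity is exactly the marginal change in the first sum when $u_i$ moves to cluster $k$, so each update is a coordinate ascent step on that objective.
\end{itemize}

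The main obstacle is Step 3, in two respects. First, the claim that these algorithms ``maximize'' the objective must be read as: they are ascent heuristics whose fixed points are local maxima of the objective, not global optima. Second, I must verify that the node-wise gain computed by \texttt{LPAb}, and by the local-move phase of \texttt{Louvain}/\texttt{Leiden}, matches the change in the double sum when a single node is relabeled. The algebra in Steps 1--2 is routine; the only care needed there is to state clearly which null model (bipartite versus unipartite degree product) the algorithm is run with.
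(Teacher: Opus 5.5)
Your proposal is correct and follows essentially the same route as the paper: expand $\sigma^{(u)}_k\sigma^{(v)}_k$ and $|\U_k|\cdot|\V_k|$ into pairwise sums over $\U_k\times\V_k$, rewrite them with $\delta(u_i,v_j)$, and read \texttt{LPAb}'s node-wise argmax rule as local maximization of the resulting objective. You are more careful than the paper in three places: the $1/|\EDG|$ prefactor in Eq.~\eqref{eq:bmod}, which the paper drops silently; reading ``maximize'' as local ascent; and requiring the bipartite rather than unipartite null model for \texttt{Louvain}/\texttt{Leiden}. One small nit: \texttt{LPAb}'s per-label score is $u_i$'s contribution under label $k$, not the marginal gain itself, but the two differ by a $k$-independent term, so the argmax is unchanged.
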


\begin{lemma}\label{lem:LP}
The optimization objective of both \texttt{LP} and \texttt{SCC} can be expressed as $\max_{\YM}\sum_{u_i\in \U}\sum_{v_j\in \V}{\BM_{i,j}\cdot \delta(u_i,v_j)}$.
\end{lemma}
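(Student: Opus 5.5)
We handle \texttt{LP} and \texttt{SCC} separately and show that each optimizes Eq.~\eqref{eq:trace-max}, i.e., Eq.~\eqref{eq:overall-obj-new} with $\gamma=0$. Once that is shown, the expression $\sum_{u_i\in\U}\sum_{v_j\in\V}\BM_{i,j}\delta(u_i,v_j)$ follows directly from Eq.~\eqref{eq:trace-to-sum}.

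\textbf{Label propagation.} Here we use the known view of \texttt{LP} as a local greedy maximizer of the number of intra-label edges, the Potts-model interpretation of Barber and Clark~\cite{barber2009detecting}. In each step, a node $x$ takes the label $\ell$ that is most frequent among its neighbors, i.e., it maximizes $\sum_{y\in\N(x)}\mathbb{1}[\ell(y)=\ell]$. We then isolate the part of the objective that depends on $x$'s label. If $x=u_i$, the terms of $\sum_{i,j}\BM_{i,j}\delta(u_i,v_j)$ that involve $u_i$ are exactly $\sum_{v_j\in\N(u_i)}\delta(u_i,v_j)$, and the remaining terms do not depend on $u_i$'s label. The item case is symmetric. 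So every \texttt{LP} update is a coordinate-ascent step that cannot decrease the objective. Since the objective is bounded by $|\EDG|$, asynchronous \texttt{LP} reaches a local maximum of $\max_{\YM}\mathtt{Trace}(\YM^\top\AM\YM)$. Any balance term is absent, so \texttt{LP} corresponds to $\gamma=0$.

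\textbf{Spectral co-clustering.} The natural route is Dhillon's result~\cite{dhillon2001co}. \texttt{SCC} minimizes the normalized cut of the bipartite graph with adjacency matrix $\AM$, relaxes the problem through the singular vectors of $\DM_u^{-1/2}\BM\DM_v^{-1/2}$, and then discretizes. We write the cut as $\mathrm{cut}=\sum_k\big(\mathbf{1}^\top\DM\YM_{:,k}-\YM_{:,k}^\top\AM\YM_{:,k}\big)$. Summed over $k$, the first part is the constant $2|\EDG|$, because every node belongs to exactly one cluster. Minimizing the cut is therefore equivalent to maximizing $\mathtt{Trace}(\YM^\top\AM\YM)$, and this is the quantity expanded in Eq.~\eqref{eq:trace-to-sum}. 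The spectral relaxation is only the solver column in Table~\ref{tbl:framework}; the underlying objective is the cut.

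\textbf{Main obstacle.} The difficulty is the normalization in \texttt{SCC}. Normalized cut divides each cluster's cut by its volume, so its numerator matches the trace term but the full objective is not literally $\mathtt{Trace}(\YM^\top\AM\YM)$. The plan is to state that the cut numerator which \texttt{SCC} minimizes, and which its spectral relaxation targets, is $2|\EDG|-\mathtt{Trace}(\YM^\top\AM\YM)$. The volume normalization then plays the role of a constraint on the relaxed indicator vectors, not an extra additive balance penalty of the form $\gamma\,\fvec\fvec^\top$. This is why \texttt{SCC} is placed at $\gamma=0$ and its objective reduces to the stated form.
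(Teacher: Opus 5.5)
Your proposal follows the paper's proof essentially step for step. You read \texttt{LP} as a node-wise greedy maximizer of $\sum_{v_j\in\N(u_i)}\BM_{i,j}\delta(u_i,v_j)$, and you reduce \texttt{SCC} via $\mathtt{Trace}(\YM^\top(\DM-\AM)\YM)=\mathrm{const}-\mathtt{Trace}(\YM^\top\AM\YM)$; your coordinate-ascent monotonicity remark and the correct constant $2|\EDG|$ (the paper writes $|\EDG|$, which is harmless) are small improvements. Your caveat that Dhillon's normalized cut is not literally this unnormalized trace for binary $\YM$ is a real subtlety that the paper's proof also glosses over, so your treatment of it is no weaker than the paper's.
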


It can be observed that the key differences of these algorithms lie in the choices of parameter $\gamma$, weights of users and items, as well as the solver used for optimization.

\stitle{Impact on User and Item Embeddings}
The overall objecive in Eq.~\eqref{eq:overall-obj-new} can be rewritten as
\begin{equation*}
\max_{\C_1,\ldots,\C_K}\sum_{k=1}^K\sum_{u_i\in \U_k}\sum_{v_j\in \V_k}{\BM_{i,j}-\gamma\cdot w^{(u)}_i\cdot w^{(v)}_j}.
\end{equation*}
From the perspective of each user $u_i$, if we are given fixed item clusters $\{\V_1,\ldots,\V_K\}$, $u_i$ is assigned to the user cluster $\U_k$ when its total {\em penalized} interaction weight with items in $\V_k$ is minimized, i.e., 
\begin{equation}\label{eq:kappa}
\kappa(u_i)=\argmax{1\le \ell \le K}{\sum_{v_j\in \V_\ell}{\BM_{i,j}-\gamma\cdot w^{(u)}_i\cdot w^{(v)}_j}}=k.
\end{equation}
Accordingly, users $u_i$ and $u_l$ are assigned to the same cluster $\U_k$ if and only if $\kappa(u_i)=\kappa(u_l)=k$.
Recall that in the resulting user and item embeddings $\UM$ and $\VM$, users or items from the same clusters will share the same embedding vector in the codebook $\ZM^{(u)}$ and $\ZM^{(v)}$ (see \S~\ref{sec:problem}). The above Eq.~\eqref{eq:kappa} implies that users with similar interaction patterns with $K$ sets of items are likely to have the same embedding vectors.
A similar conclusion can be made for item embeddings when the user clusters are fixed.

\begin{algorithm}[!t]
\caption{The Basic \algo{} Algorithm}\label{alg:LP}
\KwIn{Bipartite graph $\G$, space budget $B$, integer $T$, and coefficient $\gamma$}
\KwOut{Sketching matrices $\YM^{(u)}$ and $\YM^{(v)}$}
\tcc{Initializing labels of users and items}
$\ell(u_i)\gets i \ \forall{u_i\in \U}$ and $\ell(v_j)\gets |\U|+j \ \forall{v_j\in \V}$\;
$t\gets 0$\;
\tcc{Updating labels of users and items}
\While{$K^{(u)}+K^{(v)}>B$ and $t<T$}{
\For{$x_i\in \G$}{
$\LN_i \gets \{\ell(y_j)|y_j\in \N(x_i)\cup x_i\}$\;
\For{$k \in \LN_i$}{
\If{$x_i\in \U$}{
Compute $p(k)$ according to Eq.~\eqref{eq:p_l_u}\;
}\Else{
Compute $p(k)$ according to Eq.~\eqref{eq:p_l_v}\;
}
}
$\ell(x_i) \gets \argmax{k \in \LN_i}{p(k)}$\;
}
$t\gets t+1$\;
}
\tcc{Re-labeling users and items}
$\ell^{(u)}: \{\ell(u_i)|u_i\in \U\}\rightarrow [K^{(u)}]$\;
$f(u_i)\gets \ell^{(u)}(\ell(u_i))\ \forall{u_i\in \U}$\;
$\ell^{(v)}: \{\ell(v_j)|v_j\in \V\}\rightarrow [K^{(v)}]$\;
$h(v_j)\gets \ell^{(v)}(\ell(v_j))\ \forall{v_j\in \V}$\;
\tcc{Constructing sketching matrices}
$\YM^{(u)}_{i,f(u_i)} \gets 1\ \forall{u_i\in \U}$ and $\YM^{(v)}_{j,h(v_j)} \gets 1\ \forall{v_j\in \V}$\;
\end{algorithm}

\subsection{Hybrid Weighting Scheme}\label{sec:node-weight}
As pinpointed in Table~\ref{tbl:framework}, previous clustering methods unified into the \algo{} framework all adopt the same weighting functions for users and items in $\G$, either a degree-related function $\frac{d(x)}{\sqrt{|\EDG|}}$ used in the bipartite modularity or a constant (e.g., $1$) used in the CPM.
However, this simple strategy neglects the intrinsic distinction of users and items in the context of recommendation systems, and hence, causes flawed cluster assignments.

More precisely, recall that in Eq.~\eqref{eq:kappa}, user $u_i$ is assigned to cluster $\C_k$ if $\sum_{v_j\in \C_k}{\BM_{i,j}-\gamma\cdot w^{(u)}_i\cdot w^{(v)}_j}$ is maximal. 
When the weight $w^{(v)}_j=\frac{d(v_j)}{\sqrt{|\EDG|}}$ is adopted for items, user $u_i$ will be prone to join the cluster of its interacted items with low degrees (i.e., unpopular items) due to the penalty term $-\gamma\cdot w^{(u)}_i\cdot \frac{d(v_j)}{\sqrt{|\EDG|}}$, which is counterintuitive. In \algo{}, we view all items interacted by user $u_i$ equally instead, leading to the following weight for item $v_j$  
\begin{equation}
w^{(v)}_j = \frac{1}{\sqrt{|\V|}}.
\end{equation}

On the other hand, item $v_j$ will be merged to cluster $\C_k$ with maximal $\sum_{u_i\in \C_k}{\BM_{i,j}-\gamma\cdot w^{(u)}_i\cdot w^{(v)}_j}$. In this case, item $v_j$ will prioritize the cluster of users with lower weights $w^{(u)}_i$.
Recall that in real-world scenarios, compared to those who interacted with plenty of items, the interaction with item $v_j$ from a user $u_i$ with fewer interactions conveys a stronger preference and interest. Intuitively, $v_j$ should be more likely to be grouped with $u_i$. Given this observation, we resort to the degree-related weight for users, as exemplified below:
\begin{equation}
w^{(u)}_i = \frac{d(u_i)}{\sqrt{\sum_{u_\ell \in \U}{d(u_\ell)}}} = \frac{d(u_i)}{\sqrt{|\EDG|}}.
\end{equation}

\subsection{The Optimization Algorithm}\label{sec:LP}

The pseudo-code of our optimization solver for \algo{} is illustrated in Algorithm~\ref{alg:LP}. Given the bipartite graph $\G=(\U,\V,\EDG)$, the space budget $B$ for codebooks, the maximum number $T$ of iterations, and a coefficient $\gamma$, \algo{} begins by assigning a {\em unique} cluster label $\ell(u_i)\gets i$ (resp. $\ell(v_j)\gets |\U|+j$) to each user $u_i$ (item $v_j$) (Line 1) and reset the iteration count (Line 2).  
Afterwards, Algorithm~\ref{alg:LP} starts an iterative procedure (Lines 3-12) to update the labels of users and items until the total number of labels for users and items $K^{(u)}+K^{(v)}$ reaches the space budget $B$, where $K^{(u)}$ and $K^{(v)}$ are defined as
\begin{equation*}
K^{(u)} \gets |\{\ell(u_i)|u_i\in \U\}|,\quad K^{(v)} \gets |\{\ell(v_j)|v_j\in \V\}|.
\end{equation*}

In each iteration, \algo{} picks each user or item from $\G$ and updates its label to the one with the highest likelihood among all the labels of its adjacent neighbors, akin to the {\em label propagation} scheme~\cite{raghavan2007near}. Specifically, for each node $x_i\in \G$, \algo{} first collects its candidate label set $\LN_i$ from the neighbors and $x_i$ at Line 5. Then, for every label $k\in \LN_i$, we calculate its likelihood by
\begin{equation}\label{eq:p_l_u}
p(k) \gets  \sum_{v_j\in \N(x_i)\cap \C_k}{\BM_{i,j}} - \gamma \sum_{v_j\in \V_k}{w^{(u)}_i\cdot w^{(v)}_j}
\end{equation}
if $x_i$ is a user in $\U$, and by
\begin{equation}\label{eq:p_l_v}
p(k) \gets \sum_{u_j\in \N(x_i)\cap \C_k}{\BM_{j,i}} - \gamma \sum_{u_j\in \U_k}{w^{(u)}_j\cdot w^{(v)}_{i}}
\end{equation}
if it is an item in $\V$ (Lines 6-10).
Next, at Line 11, the new label of $x_i$ is set to the one with the maximum likelihood. 
Note that the first term in the likelihood $p(k)$ measures the intra-cluster connectivity of $x_i$ to its adjacent items (resp. users) with label $k$, while the second term considers the penalty $\gamma\cdot w^{(u)}_j\cdot w^{(v)}_{i}$ for all items (resp. users) with label $k$.
In essence, this label updating rule is to maximize the objective Eq.~\eqref{eq:overall-obj-new} {\em locally} for each target node $x_i$ in a greedy fashion.

After that, Algorithm~\ref{alg:LP} maps the cluster labels of users and items to consecutive integers $[K^{(u)}]$ and $[K^{(v)}]$, respectively, followed by relabeling all the users and items accordingly (Lines 13-16). Based thereon, we can construct sketching matrices $\YM^{(u)}$ and $\YM^{(v)}$ at Line 17 with the new labels.

\stitle{Remark} Although \texttt{Louvain} can be applied to optimize our objective in Eq.~\eqref{eq:overall-obj-new} with minor changes, its optimization strategy tends to merge small clusters, leading to the {\em resolution limit} issue~\cite{kim2022abc} on bipartite networks. In other words, \texttt{Louvain} will greedily group nodes with low connectivity for higher overall modularity. The empirical cluster size distributions reported in Figure~\ref{fig:cluster-size-dist}
confirm this problem, and Table~\ref{tab:weighting} manifests its detrimental effects on recommendation performance.

\begin{algorithm}[!t]
\caption{The Complete \algo{} Algorithm}\label{alg:BACO}
{\nonl Input, output, and Lines 1-2 are the same as Algorithm~\ref{alg:LP}\\}
\setcounter{AlgoLine}{2}
\textbf{while} $K^{(u)}+K^{(v)}>B^\prime$ and $t<T$ \textbf{do}\\
{\nonl Lines 4-17 are the same as Algorithm~\ref{alg:LP}\;}
\setcounter{AlgoLine}{17}
Rerun Lines 4-10 in Algorithm~\ref{alg:LP} for each $x_i\in \U$\;
$\ell^{(u)}_{\text{scu}}: \{\ell(u_i)|u_i\in \U\}\rightarrow [K^{(u)}]$\;
$f_{\text{scu}}(u_i)\gets \ell^{(u)}_{\text{scu}}(\ell(u_i))\ \forall{u_i\in \U}$\;
$\YM^{(u)}_{i,f_{\text{scu}}(u_i)} \gets 1\ \forall{u_i\in \U}$\;
\end{algorithm}

\subsection{Generating Secondary Clusters for Users}\label{sec:post}
Although our basic \algo{} in Algorithm~\ref{alg:LP} can achieve an effective balance and optimization of two terms in Eq.~\eqref{eq:overall-obj}, and thus, produce good divisions of users and items, as empirically validated in \S~\ref{sec:exp-weighting} and \S~\ref{sec:exp-SCU}, it suffers from a fundamental deficiency of representing each user via exactly one embedding in codebooks, whilst e-commerce users have multiple, evolving interests and often share taste similarities with various user groups.

To remedy this issue while maintaining parameter efficiency, we propose to assign each user to two clusters, which demands a small space of $O(|\U|)$ for additional user sketches, leading to a new space budget for the codebooks: $B^\prime\gets\frac{B\cdot d - |\U|}{d}$.
Accordingly, the loop condition at Line 3 in Algorithm~\ref{alg:LP} will be changed to $K^{(u)}+K^{(v)}>B^\prime$. Next, the task is to construct the secondary clusters for users.
Specifically, as displayed in Algorithm~\ref{alg:BACO}, we employ a simple but effective SCU strategy, which only repeats Lines 4-10 in Algorithm~\ref{alg:LP} for each user in $\U$ for once to generate updated cluster labels as their secondary clusters (Line 18). 
\algo{} then maps users to $K^{(u)}$ consecutive integers based on the secondary cluster labels, i.e., $f_{\text{scu}}: \U\rightarrow [K^{(u)}]$ at Lines 19-20.
Eventually, for each user $u_i\in \U$, we set the entry $\YM^{(u)}_{i,f_{\text{scu}}(u_i)}$ that corresponds to its secondary cluster in its sketch to $1$ (Line 21).

\subsection{Complexity Analysis}
In Algorithm~\ref{alg:LP}, each iteration accesses the labels of neighbors in $O(|\EDG|)$ time at Lines 3-4. 
Since only one label is updated at each step (Line 11), at most two clusters are affected. 
By maintaining global sums of cluster weights and updating them whenever label changes, the computation of Eq.~\eqref{eq:p_l_u} or Eq.\eqref{eq:p_l_v} can be done in $O(1)$. 
To ensure termination over disconnected $\G$, we impose a maximum number $T$ (typically $8$) of iterations on the iterative procedure (Lines 4-11). In practice, the label propagation process (Lines 3-12) can usually converge and terminate rapidly~\cite{raghavan2007near}, i.e., the actual number of iterations is less than $T$.\footnote{As validated by our empirical results in Appendix~\ref{sec:param-set}.
}
Therefore, Algorithm~\ref{alg:LP} takes $O(T|\EDG|)$ time.  
Similarly, obtaining the secondary cluster labels requires only $O(|\EDG|)$ time. At Lines 13-17 and 19-21, clusters are mapped to sets of consecutive integers, which takes $O(|\U| + |\V|)$ time. Overall, the time complexity of the complete \algo{} is $O(|\U| + |\V| + T|\EDG|)$. 
The space complexity is bounded by $O(|\U| + |\V| + |\EDG|)$ due to the storage of $\G$ and clusters.

\section{Experiments}\label{sec:exp}
This section experimentally evaluates \algo{} against 18 ETC baselines in recommendations, and conducts related ablation studies and component analyses to answer the following research questions:
\begin{itemize}[leftmargin=*]
    \item \textbf{RQ1}: How does \algo{} perform in terms of accuracy compared to full models and ETC baselines in recommendation tasks? (\S~\ref{exp-performance})
    \item \textbf{RQ2}: How does \algo{} perform in terms of computation efficiency and parameter reduction compared to ETC baselines? (\S~\ref{sec:exp-efficiency})
    \item \textbf{RQ3}: How effectively do the HWS and SCU enhance the performance of \algo{}? (\S~\ref{sec:exp-weighting} and \S~\ref{sec:exp-SCU})
\end{itemize}
All experiments are conducted on a Linux machine equipped with an NVIDIA Ampere A100 GPU (80 GB), AMD EPYC 7513 CPUs (2.6 GHz), and 1 TB RAM. The codes and datasets are made publicly available at \url{https://github.com/HKBU-LAGAS/BACO}.

\subsection{Experimental Setup}

\begin{table}[!t]
\renewcommand{\arraystretch}{0.9}
\centering
\caption{Statistics of datasets used in experiments.}
\label{tab:datasets-retrieval}
\vspace{-2ex}
\begin{small}
\begin{tabular}{c|cccc}
\hline
{\bf Dataset} & {\bf \#Users} & {\bf \#Items} & {\bf \#Interactions} & {\bf Density} \\ \hline
Beauty & 22,363	& 12,101 & 198,502 & 0.073\% \\
Gowalla & 29,858 & 40,981 & 1,027,370 & 0.084\% \\
Yelp2018 & 31,668 & 38,048 & 1,561,406 & 0.130\% \\
AmazonBook & 52,643 & 91,599 & 2,984,108 & 0.062\% \\
\hline
\end{tabular}
\end{small}
\vspace{-1ex}
\end{table}

\stitle{Datasets}
Table~\ref{tab:datasets-retrieval} summarizes the four datasets used in the retrieval task, including {\em Beauty}~\cite{wang2022towards}, {\em Gowalla}~\cite{cho2011friendship}, {\em Yelp2018}~\cite{wu2025graphhash}, {\em AmazonBook}~\cite{he2016ups}, each representing a specific recommendation domain and all publicly available. For each dataset, we randomly divide the data into 80\% for training, 10\% for validation, and 10\% for testing.

\stitle{Baselines}
For a thorough evaluation, we include 18 ETC baselines in the experiments, broadly categorized into three groups:
\begin{itemize}[leftmargin=*]
\item Hashing methods:
\texttt{Random}, \texttt{Frequency Hashing}~\cite{ghaemmaghami2022learning,zhang2020model}, \texttt{Double Hashing}~\cite{zhang2020model}, \texttt{Hybrid Hashing}~\cite{zhang2020model}, \texttt{LSH}~\cite{datar2004locality},
\texttt{CCE}~\cite{tsang2023clustering}, \texttt{LEGCF}~\cite{liang2024lightweight}. 

\item Graph clustering methods:
(\texttt{Double})\texttt{GraphHash}~\cite{wu2025graphhash},  \texttt{Leiden}~\cite{traag2019louvain}, \texttt{LP}~\cite{raghavan2007near}, \texttt{EBMD}~\cite{kim2022abc}, \texttt{infomap}~\cite{rosvall2008maps}, \texttt{BiM\-LPA}~\cite{taguchi2020bimlpa}, \texttt{BRIM}~\cite{platig2016bipartite}.

\item Co-clustering methods:
\texttt{SCC}~\cite{dhillon2001co}, \texttt{SBC}~\cite{kluger2003spectral}, \texttt{ITCC}~\cite{dhillon2003information}.
\end{itemize}

\stitle{Evaluation Protocol}
We evaluate top-$K$ item recommendation performance using two widely adopted ranking metrics: Recall@$K$ and NDCG@$K$, with $K = 20$ by default. The metrics are averaged over all users in the test set. For fair comparison, all methods are implemented with an identical experimental protocol, employing the classical \texttt{LightGCN}~\cite{he2020lightgcn} model as the backbone and the BPR loss function. Further experimental details are provided in Appendix~\ref{sec:add-exp-details}. 

\begin{table*}[!t]
\centering
\caption{Recommendation Performance ($k=20$). Best results highlighted in {\color{blue!30}blue} and runner-up \underline{underlined}.}
\label{tab:results}
\vspace{-2ex}
\begin{small}
\renewcommand{\arraystretch}{0.9}
\addtolength{\tabcolsep}{-0.25em}
\resizebox{\textwidth}{!}{
\begin{tabular}{c|*{4}{ccc}}
\toprule
\multirow{2}{*}{Method} & 
\multicolumn{3}{c}{Beauty} & 
\multicolumn{3}{c}{Gowalla} & 
\multicolumn{3}{c}{Yelp2018} & 
\multicolumn{3}{c}{AmazonBook} \\
 \cmidrule(lr){2-4} \cmidrule(lr){5-7} \cmidrule(lr){8-10} \cmidrule(lr){11-13}
\multicolumn{1}{c|}{} & 
\multicolumn{1}{c}{\#Params\textdownarrow} & \multicolumn{1}{c}{R@20\textuparrow} & \multicolumn{1}{c}{N@20\textuparrow} & 
\multicolumn{1}{c}{\#Params\textdownarrow} & \multicolumn{1}{c}{R@20\textuparrow} & \multicolumn{1}{c}{N@20\textuparrow} & 
\multicolumn{1}{c}{\#Params\textdownarrow} & \multicolumn{1}{c}{R@20\textuparrow} & \multicolumn{1}{c}{N@20\textuparrow} & \multicolumn{1}{c}{\#Params\textdownarrow} & \multicolumn{1}{c}{R@20\textuparrow} & \multicolumn{1}{c}{N@20\textuparrow} \\
\midrule

 Full Model
& 2.206M & 10.931 & 5.769
& 4.534M & 18.204 & 11.600 
& 4.462M & 8.867 & 5.526
& 9.231M & 8.416 & 5.487 
\\ \cline{1-13}

 \texttt{Random}
& 0.529M & 4.696 & 2.347 
& 0.742M & 9.214 & 5.909
& 0.977M & 4.894 & 3.114
& 1.255M & 2.485 & 1.704 \\

 \texttt{Frequency Hashing}~\cite{ghaemmaghami2022learning} 
& 0.529M & 4.367 & 2.194
& 0.742M & 8.574 & 5.480 
& 0.977M & 4.931 & 3.142
& 1.255M & 2.184 & 1.497  \\

 \texttt{Double Hashing}~\cite{zhang2020model}
& 0.529M & 4.132 & 2.080 
& 0.742M & 9.215 & 5.994  
& 0.977M & 5.293 & 3.366
& 1.255M & 2.398 & 1.687 \\

 \texttt{Hybrid Hashing}~\cite{zhang2020model}
& 0.529M & 4.564 & 2.307 
& 0.742M & 10.182 & 6.581 
& 0.977M & 6.090 & 3.884
& 1.255M & 2.800 & 1.884 \\

 \texttt{LSH}~\cite{datar2004locality}
& 1.049M & 5.566 & 2.787 
& 1.049M & 9.768 & 6.253 
& 1.049M & 4.707 & 2.952
& 2.097M & 3.214 & 2.167 \\ %

\texttt{CCE}~\cite{tsang2023clustering}
& 0.529M & 7.400 & 3.699
& 0.742M & 10.455 & 6.851 
& 0.977M & 5.505 & 3.538
& 1.255M & 3.233 & 2.124\\

 \texttt{LEGCF}~\cite{liang2024lightweight}
& 0.529M & 7.908 & 3.738
& 0.742M & 8.463 & 5.125
& 0.977M & 4.349 & 2.685
& OOM & - & - \\ \cline{1-13}

 \texttt{GraphHash}~\cite{wu2025graphhash}
& 0.529M & 8.167 & 4.272 
& 0.742M & 15.325 & 9.658
& 0.977M & 6.244 & 3.882
& 1.255M & 7.261 & 5.033 \\

 \texttt{DoubleGraphHash}~\cite{wu2025graphhash}
& 0.529M & 6.266 & 3.198
& 0.742M & 12.377 & 7.905 
& 0.977M & 5.134 & 3.236
& 1.255M & 5.194 & 3.501\\   

 \texttt{LP}~\cite{raghavan2007near}
& 0.530M & 7.883 & 4.088
& 0.743M & 11.749 & 7.464
& 0.977M & 5.423 & 3.391
& 1.255M & 3.483 & 2.275  \\  

 \texttt{Leiden}~\cite{traag2019louvain}
& 0.529M & 7.943 & 4.141 
& 0.743M & \underline{15.406} & \underline{9.738}
& 0.979M & 6.230 & 3.852
& 1.256M & \underline{7.426} & \underline{5.115} \\

 \texttt{EBMD}~\cite{kim2022abc}
& 0.536M & 8.095 & 4.037 
& 0.746M & 10.313 & 6.564
& 0.996M & 4.947 & 3.088
& 1.261M & 3.332 & 2.199\\ %

 \texttt{infomap}~\cite{rosvall2008maps}
& 0.025M & 4.614 & 2.252 
& 0.016M & 5.893 & 3.590 
& 0.037M & 4.335 & 2.688 
& 0.018M & 1.166 & 0.736 \\

 \texttt{BiMLPA}~\cite{taguchi2020bimlpa} 
& 0.001M & 3.765 & 1.708
& 0.007M & 6.397 & 3.920 
& 0.008M & 4.462 & 2.767
& 0.001M & 3.765 & 1.708  \\

  \texttt{BRIM}~\cite{platig2016bipartite}
& 0.003M & 8.253 & 4.013 
& 0.002M & 8.074 & 5.150 
& 0.001M & 4.483 & 2.789 
& 0.001M & 2.302 & 1.440 \\ \cline{1-13}

 \texttt{SCC}~\cite{dhillon2001co} 
& 0.529M & \underline{8.801} & \underline{4.476} 
& 0.743M & 13.792 & 8.817
& 0.977M & \underline{6.633} & \underline{4.124}
& 1.255M & 5.558 & 3.615\\ 

 \texttt{SBC}~\cite{kluger2003spectral} 
& 0.529M & 6.816 & 3.345
& 0.742M & 10.458 & 6.715
& 0.977M & 4.831 & 3.023
& 1.255M & 3.556 & 2.372\\

 \texttt{ITCC}~\cite{dhillon2003information}
& 0.529M & 4.812 & 2.437 
& 0.742M & 7.685 & 5.156 
& 0.977M & 4.375 & 2.844 
& 1.255M & 2.610 & 1.790  \\ \cline{1-13}

\algo{}
& 0.529M & \cellcolor{blue!30}9.079 & \cellcolor{blue!30}4.574
& 0.742M & \cellcolor{blue!30}16.692 & \cellcolor{blue!30}10.674
& 0.976M & \cellcolor{blue!30}7.510 & \cellcolor{blue!30}4.763 
& 1.255M & \cellcolor{blue!30}7.892 & \cellcolor{blue!30}5.240\\

 v.s. Baselines
& - & +0.278 & +0.098
& - & +1.286 & +0.936
& - & +0.877 & +0.639
& - & +0.466 & +0.125  \\

 v.s. Full Model
& -76.0\%& -1.852 & -1.195  
& -83.6\% & -1.512 & -0.926
& -78.1\% & -1.357 & -0.763
& -86.4\% & -0.524 & -0.247\\

\bottomrule
\end{tabular}
}
\end{small}
\end{table*}

\subsection{Recommendation Performance Evaluation}\label{exp-performance}

Table~\ref{tab:results} reports the mean Recall@20 and NDCG@20 performance of \algo{} and the baselines across four datasets, averaged over 5 independent runs using the best hyperparameters. The results under other top-$K$ (10, 50) values are quantitatively similar, and thus, are deferred to appendix. We can make the following observations.

Firstly, our proposed algorithm consistently outperforms all baselines across all tested datasets, with marked improvements. For instance, our algorithm surpasses the best baselines by considerable margins of $0.28\%$, $1.29\%$, $0.88\%$, and $0.47\%$ in terms of Recall@20 on {\em Beauty}, {\em Gowalla}, {\em Yelp2018}, and {\em AmazonBook}, respectively. Particularly, \algo{} attains an average improvement of $0.96\%$ in recall and $0.56\%$ in NDCG over state-of-the-art community detection methods, and offers significantly faster runtime. Additionally, \algo{} consistently maintains outstanding performance, while \texttt{SCC} and \texttt{Leiden} tend to fail on certain datasets. These results demonstrate the effectiveness of our \algo{} in exploiting and integrating collaborative information for recommendation across diverse contexts, enabled by its hybrid weighting schemes and refined secondary cluster design. 

Moreover, compared with simple hash methods (e.g., \texttt{Random} and \texttt{Frequency}), which rely solely on statistical probabilities, graph-based methods (e.g., \texttt{GraphHash} and \texttt{LP}) consistently demonstrate superior performance across most datasets because of their ability to incorporate graph structural information.
Unlike conventional clustering algorithms (e.g., $k$-means), most community detection methods (such as \texttt{Infomap}, \texttt{BiMLPA}, and \texttt{BRIM}) do not provide control over the cluster number, which leads to fewer parameters but inferior performance in our experiments. In addition, \texttt{SCC}, as a classical co-clustering method, serves as the strongest baseline on {\em Beauty} and {\em Yelp2018}, but incurs a much higher computational cost.

\begin{figure}[!t]
\centering
\begin{small}
\begin{tikzpicture}
   \hspace{3mm}\begin{customlegend}[
        legend entries={\texttt{SCC},\texttt{GraphHash},\texttt{LP}},
        legend columns=3,
        area legend,
        legend style={at={(0.45,1.25)},anchor=north,draw=none,font=\footnotesize,column sep=0.25cm}]
        \addlegendimage{pattern color=white, preaction={fill, myblue}, pattern={north east lines}}  
        \addlegendimage{pattern color=white, preaction={fill, NSCcol1}, pattern={north west lines}} 
        \addlegendimage{pattern color=white, preaction={fill, cyan}, pattern={crosshatch}} 
    \end{customlegend}
    \begin{customlegend}[
        legend entries={\texttt{Leiden},\texttt{EBMD},\algo{}},
        legend columns=3,
        area legend,
        legend style={at={(0.45,0.75)},anchor=north,draw=none,font=\footnotesize,column sep=0.25cm}]   
        \addlegendimage{pattern color=white, preaction={fill, teal}, pattern={crosshatch dots}}    
        \addlegendimage{pattern color=white, preaction={fill, mypurple}, pattern={dots}}
        \addlegendimage{pattern color=white, preaction={fill, mycyan}}
    \end{customlegend}
\end{tikzpicture}
\\[-8pt]
\vspace{0mm}

\hspace{-2mm}\subfloat[{\em Beauty}]{
\begin{tikzpicture}[scale=1]
\begin{axis}[
    height=3.4cm,
    width=2.9cm,
    xtick=\empty,
    ybar=0.6pt,
    bar width=0.18cm,
    enlarge x limits=0.15,
    ylabel={\em time} (sec),
    xticklabel=\empty,
    ymin=0.1,
    ymax=200,
    ytick={0.1,1,10,100,1000},
    yticklabels={$0.1$,$1$,$10$,$10^2$,$10^3$},
    ymode=log,
    xticklabel style = {font=\scriptsize},
    yticklabel style = {font=\scriptsize},
    log origin y=infty,
    log basis y={10},
    every axis y label/.style={at={(current axis.north west)},right=5mm,above=0mm},
    legend style={draw=none, at={(1.02,1.02)},anchor=north west,cells={anchor=west},font=\scriptsize},
    legend image code/.code={ \draw [#1] (0cm,-0.1cm) rectangle (0.3cm,0.15cm);},
    ]

\addplot [pattern color=white, preaction={fill, myblue}, pattern={north east lines},nodes near coords={$\star$}, nodes near coords style={color=black, font=\normalsize, yshift=-1}] coordinates {(1,72)}; 
\addplot [pattern color=white, preaction={fill, NSCcol1}, pattern={north west lines}] coordinates {(1,0.19)}; 
\addplot [pattern color=white, preaction={fill, teal}, pattern={crosshatch dots} ] coordinates {(1,0.48)};
\addplot [pattern color=white, preaction={fill, mypurple}, pattern={dots}] coordinates {(1,0.74)};
\addplot [pattern color=white, preaction={fill, mycyan}] coordinates {(1,0.18)};

\end{axis}
\end{tikzpicture}
\label{fig:time-Beauty}%
}%
\subfloat[{\em Gowalla}]{
\begin{tikzpicture}[scale=1]
\begin{axis}[
    height=3.4cm,
    width=2.9cm,
    xtick=\empty,
    ybar=0.6pt,
    bar width=0.18cm,
    enlarge x limits=0.15,
    ylabel={\em time} (sec),
    xticklabel=\empty,
    ymin=0.1,
    ymax=1000,
    ytick={0.1,1,10,100,1000},
    yticklabels={$0.1$,$1$,$10$,$10^2$,$10^3$},
    ymode=log,
    xticklabel style = {font=\scriptsize},
    yticklabel style = {font=\scriptsize},
    log origin y=infty,
    log basis y={10},
    every axis y label/.style={at={(current axis.north west)},right=5mm,above=0mm},
    legend style={draw=none, at={(1.02,1.02)},anchor=north west,cells={anchor=west},font=\scriptsize},
    legend image code/.code={ \draw [#1] (0cm,-0.1cm) rectangle (0.3cm,0.15cm);},
    ]

\addplot [pattern color=white, preaction={fill, myblue}, pattern={north east lines}] coordinates {(1,402.3)}; 
\addplot [pattern color=white, preaction={fill, NSCcol1}, pattern={north west lines}] coordinates {(1,1.4)}; 
\addplot [pattern color=white, preaction={fill, cyan}, pattern={crosshatch}] coordinates {(1,0.2)};
\addplot [pattern color=white, preaction={fill, teal}, nodes near coords={$\star$}, pattern={crosshatch dots}] coordinates {(1,2.1)};
\addplot [pattern color=white, preaction={fill, mycyan}, nodes near coords style={color=black, font=\normalsize, yshift=-1} ] coordinates {(1,1.0)};

\end{axis}
\end{tikzpicture}
\hspace{1mm}
\label{fig:time-Gowalla}%
}%
\subfloat[{\em Yelp2018}]{
\begin{tikzpicture}[scale=1]
\begin{axis}[
    height=3.4cm,
    width=2.9cm,
    xtick=\empty,
    ybar=0.6pt,
    bar width=0.18cm,
    enlarge x limits=0.15,
    ylabel={\em time} (sec),
    xticklabel=\empty,
    ymin=0.1,
    ymax=3000,
    ytick={0.1,1,10,100,1000},
    yticklabels={$0.1$,$1$,$10$,$10^2$,$10^3$},
    ymode=log,
    xticklabel style = {font=\scriptsize},
    yticklabel style = {font=\scriptsize},
    log origin y=infty,
    log basis y={10},
    every axis y label/.style={at={(current axis.north west)},right=5mm,above=0mm},
    legend style={draw=none, at={(1.02,1.02)},anchor=north west,cells={anchor=west},font=\scriptsize},
    legend image code/.code={ \draw [#1] (0cm,-0.1cm) rectangle (0.3cm,0.15cm);},
    ]

\addplot [pattern color=white, preaction={fill, myblue}, nodes near coords={$\star$}, pattern={north east lines}] coordinates {(1,546.8)}; 
\addplot [pattern color=white, preaction={fill, NSCcol1}, pattern={north west lines}] coordinates {(1,1.54)}; 
\addplot [pattern color=white, preaction={fill, cyan}, pattern={crosshatch}] coordinates {(1,0.28)};
\addplot [pattern color=white, preaction={fill, teal}, pattern={crosshatch dots}] coordinates {(1,2.5)};
\addplot [pattern color=white, preaction={fill, mycyan}, nodes near coords style={color=black, font=\normalsize, yshift=-1} ] coordinates {(1,1.58)};

\end{axis}
\end{tikzpicture}
\hspace{1mm}
\label{fig:time-yelp}%
}%
\subfloat[{\em AmazonBook}]{
\begin{tikzpicture}[scale=1]
\begin{axis}[
    height=3.4cm,
    width=2.9cm,
    xtick=\empty,
    ybar=0.6pt,
    bar width=0.18cm,
    enlarge x limits=0.15,
    ylabel={\em time} (sec),
    xticklabel=\empty,
    ymin=1,
    ymax=3000,
    ytick={1,10,100,1000,10000},
    yticklabels={$1$,$10$,$10^2$,$10^3$,$10^4$},
    ymode=log,
    xticklabel style = {font=\scriptsize},
    yticklabel style = {font=\scriptsize},
    log origin y=infty,
    log basis y={10},
    every axis y label/.style={at={(current axis.north west)},right=5mm,above=0mm},
    legend style={draw=none, at={(1.02,1.02)},anchor=north west,cells={anchor=west},font=\scriptsize},
    legend image code/.code={ \draw [#1] (0cm,-0.1cm) rectangle (0.3cm,0.15cm);},
    ]

\addplot [pattern color=white, preaction={fill, myblue}, pattern={north east lines}] coordinates {(1,1727.2)}; 
\addplot [pattern color=white, preaction={fill, NSCcol1}, pattern={north west lines}] coordinates {(1,3.7)}; 
\addplot [pattern color=white, preaction={fill, cyan}, pattern={crosshatch}] coordinates {(1,1.85)};
\addplot [pattern color=white, preaction={fill, teal}, nodes near coords={$\star$}, pattern={crosshatch dots}] coordinates {(1,6.0)};
\addplot [pattern color=white, preaction={fill, mycyan}, nodes near coords style={color=black, font=\normalsize, yshift=-1} ] coordinates {(1,3.5)};

\end{axis}
\end{tikzpicture}
\hspace{1mm}
\label{fig:time-Amazon}%
}%
\vspace{-3mm}
\end{small}
\caption{Efficiency of strong methods in constructing sketching matrices. (best baselines in Table~\ref{tab:results} are marked with $\star$)} \label{fig:efficiency}
\vspace{-2ex}
\end{figure}

\begin{figure}[!t]
\centering
\begin{small}
\begin{tikzpicture}
    \begin{customlegend}
    [legend columns=5,
        legend entries={\algo{},\texttt{GraphHash},\texttt{Leiden},\texttt{SCC}},
        legend style={at={(0.45,1.35)},anchor=north,draw=none,font=\footnotesize,column sep=0.2cm}]
    \addlegendimage{line width=0.7mm,mark size=4pt,color=mycyan}
    \addlegendimage{line width=0.7mm,mark size=4pt,color=NSCcol1}
    \addlegendimage{line width=0.7mm,mark size=4pt,color=mypurple}
    \addlegendimage{line width=0.7mm,mark size=4pt,color=myblue}
    \end{customlegend}
\end{tikzpicture}
\\[-\lineskip]
\vspace{-4mm}
\subfloat[{\em Beauty}]{
\begin{tikzpicture}[scale=1,every mark/.append style={mark size=3pt}]
    \begin{axis}[
        height=\columnwidth/2.4,
        width=\columnwidth/2.0,
        ylabel={\it R@20},
        xmin=1, xmax=5,
        ymin=7.6, ymax=9.7,
        xtick={1.0,2.0,3.0,4.0,5.0},
        ytick={7.6,8.0,8.4,8.8,9.2,9.6},
        xticklabel style = {font=\scriptsize},
        yticklabel style = {font=\footnotesize},
        xticklabels={1/2,1/3,1/4,1/5,1/6},
        yticklabels={7.6,8.0,8.4,8.8,9.2,9.6},
        every axis y label/.style={font=\footnotesize,at={(current axis.north west)},right=2mm,above=0mm},
        legend style={fill=none,font=\small,at={(0.02,0.99)},anchor=north west,draw=none},
    ]
    \addplot[line width=0.7mm, smooth, color=mycyan]  %
        plot coordinates {
            (1,	8.58)
            (2,	9.55)
            (3,	9.23)
            (4,	8.87)
            (5,	8.64)
        };
    \addplot[line width=0.7mm, smooth, color=NSCcol1]  %
        plot coordinates {
            (1,	8.30)
            (2,	7.86)
            (3,	7.88)
            (4,	8.08)
            (5,	7.98)
        };
    \addplot[line width=0.7mm, smooth,  color=mypurple]  %
        plot coordinates {
            (1,	7.60)
            (2,	8.26)
            (3,	8.01)
            (4,	8.09)
            (5,	7.98)
        };
    \addplot[line width=0.7mm, smooth, color=myblue]  %
        plot coordinates {
            (1,	8.11)
            (2,	8.32)
            (3,	8.73)
            (4,	8.95)
            (5,	8.80)
        };

    \end{axis}
\end{tikzpicture}\hspace{4mm}\label{fig:vary-Beauty}%
}
\subfloat[{\em Gowalla}]{
\begin{tikzpicture}[scale=1,every mark/.append style={mark size=3pt}]
    \begin{axis}[
        height=\columnwidth/2.4,
        width=\columnwidth/2.0,
        ylabel={\it R@20},
        xmin=1, xmax=5,
        ymin=13.4, ymax=17.0,
        xtick={1.0,2.0,3.0,4.0,5.0},
        ytick={13.4,14.0,14.6,15.2,15.8,16.4,17.0},
        xticklabel style = {font=\scriptsize},
        yticklabel style = {font=\footnotesize},
        xticklabels={1/2,1/3,1/4,1/5,1/6},
        yticklabels={13.4,14.0,14.6,15.2,15.8,16.4,17.0},
        every axis y label/.style={font=\footnotesize,at={(current axis.north west)},right=2mm,above=0mm},
        legend style={fill=none,font=\small,at={(0.02,0.99)},anchor=north west,draw=none},
    ]
    \addplot[line width=0.7mm, smooth, color=mycyan]  %
        plot coordinates {
            (1,	15.91)
            (2,	16.20)
            (3,	16.72)
            (4,	16.50)
            (5,	16.55)
        };
    \addplot[line width=0.7mm, smooth, color=NSCcol1]  %
        plot coordinates {
            (1,	15.47)
            (2,	14.82)
            (3,	14.99)
            (4,	15.26)
            (5,	15.30)
        };
    \addplot[line width=0.7mm, smooth,  color=mypurple]  %
        plot coordinates {
            (1,	15.41)
            (2,	15.31)
            (3,	15.28)
            (4,	15.11)
            (5,	15.40)
        };
    \addplot[line width=0.7mm, smooth, color=myblue]  %
        plot coordinates {
            (1,	14.44)
            (2,	14.56)
            (3,	13.94)
            (4,	13.77)
            (5,	13.65)
        };

    \end{axis}
\end{tikzpicture}\hspace{0mm}\label{fig:vary-Gow}%
}
\vspace{-2ex}
\subfloat[{\em Yelp2018}]{
\begin{tikzpicture}[scale=1,every mark/.append style={mark size=2.2pt}]
    \begin{axis}[
        height=\columnwidth/2.4,
        width=\columnwidth/2.0,
        ylabel={\it R@20},
        xmin=1, xmax=5,
        ymin=5.9, ymax=7.7,
        xtick={1.0,2.0,3.0,4.0,5.0},
        ytick={5.9,6.2,6.5,6.8,7.1,7.4,7.7},
        xticklabel style = {font=\scriptsize},
        yticklabel style = {font=\footnotesize},
        xticklabels={1/2,1/3,1/4,1/5,1/6},
        yticklabels={5.9,6.2,6.5,6.8,7.1,7.4,7.7},
        every axis y label/.style={font=\footnotesize,at={(current axis.north west)},right=2mm,above=0mm},
        legend style={fill=none,font=\small,at={(0.02,0.99)},anchor=north west,draw=none},
    ]
    \addplot[line width=0.7mm, smooth, color=mycyan]  %
        plot coordinates {
            (1,	6.94)
            (2,	6.93)
            (3,	7.25)
            (4,	7.61)
            (5,	7.51)
        };
    \addplot[line width=0.7mm, smooth, color=NSCcol1]  %
        plot coordinates {
            (1,	6.73)
            (2,	6.04)
            (3,	6.13)
            (4,	6.23)
            (5,	6.25)
        };
    \addplot[line width=0.7mm, smooth,  color=mypurple]  %
        plot coordinates {
            (1,	6.74)
            (2,	6.17)
            (3,	6.16)
            (4,	6.25)
            (5,	6.50)
        };
    \addplot[line width=0.7mm, smooth, color=myblue]  %
        plot coordinates {
            (1,	7.17)
            (2,	7.05)
            (3,	6.74)
            (4,	6.60)
            (5,	6.61)
        };
    \end{axis}
\end{tikzpicture}\hspace{4mm}\label{fig:vary-Yelp}%
}
\subfloat[{\em AmazonBook}]{
\begin{tikzpicture}[scale=1,every mark/.append style={mark size=2.2pt}]
    \begin{axis}[
        height=\columnwidth/2.4,
        width=\columnwidth/2.0,
        ylabel={\it R@20},
        xmin=1, xmax=5,
        ymin=5.5, ymax=8.5,
        xtick={1.0,2.0,3.0,4.0,5.0},
        ytick={5.5,6,6.5,7,7.5,8,8.5},
        xticklabel style = {font=\scriptsize},
        yticklabel style = {font=\footnotesize},
        xticklabels={1/2,1/3,1/4,1/5,1/6},
        yticklabels={5.5,6.0,6.5,7.0,7.5,8.0,8.5},
        every axis y label/.style={font=\footnotesize,at={(current axis.north west)},right=2mm,above=0mm},
        legend style={fill=none,font=\small,at={(0.02,0.99)},anchor=north west,draw=none},
    ]
    \addplot[line width=0.7mm, smooth, color=mycyan]  %
        plot coordinates {
            (1,	6.83)
            (2,	6.61)
            (3,	8.26)
            (4,	8.29)
            (5,	8.13)
        };
    \addplot[line width=0.7mm, smooth, color=NSCcol1]  %
        plot coordinates {
            (1,	6.80)
            (2,	6.14)
            (3,	6.40)
            (4,	6.79)
            (5,	7.08)
        };
    \addplot[line width=0.7mm, smooth,  color=mypurple]  %
        plot coordinates {
            (1,	6.66)
            (2,	6.28)
            (3,	6.47)
            (4,	6.81)
            (5,	7.04)
        };
    \addplot[line width=0.7mm, smooth, color=myblue]  %
        plot coordinates {
            (1,	6.55)
            (2,	6.28)
            (3,	5.92)
            (4,	5.77)
            (5,	5.60)
        };
    \end{axis}
\end{tikzpicture}\hspace{0mm}\label{fig:vary-Ama}%
}
\end{small}
 \vspace{-3mm}
\caption{Performance when varying parameter ratios.} \label{fig:vary-para}
\vspace{-1ex}
\end{figure}
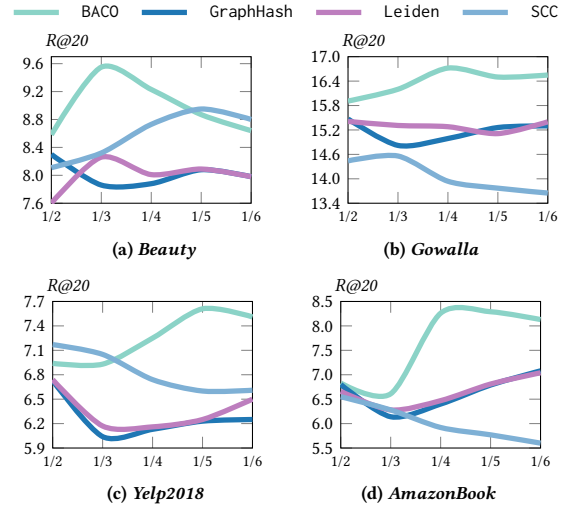

\subsection{Efficiency and Parameter Reduction}\label{sec:exp-efficiency}
Figure~\ref{fig:efficiency} presents the runtime costs of \algo{} and four top-performing baselines, as listed in Tables~\ref{tab:datasets-retrieval}. Note that the $y$-axis is in logarithmic scale, and the running time is measured in seconds (sec). 
Baselines with the best clustering quality are marked with $\star$. 

As shown in Figure~\ref{fig:efficiency}, \algo{} consistently offers superior efficiency while attaining the best performance across all datasets. Compared to the best baselines in Table~\ref{tab:results}, \algo{} achieves speedups of $400\times$, $2.1\times$, $346\times$, and $1.7\times$ on {\em Beauty}, {\em Gowalla}, {\em Yelp2018}, and {\em AmazonBook}, respectively. Specifically, \algo{} is at least $346\times$ faster than coclustering methods. Furthermore, when compared to highly efficient modularity maximization approaches (e.g., \texttt{GraphHash} and \texttt{Leiden}), \algo{} surpasses them both in speed and in clustering quality. While \texttt{LP} is slightly more efficient owing to its lightweight label propagation, our method consistently outperforms LP in clustering quality, achieving recall improvements in the range of $1.2\%$ to $4.9\%$.

Figure~\ref{fig:vary-para} presents the Recall@20 results under varying compression ratios ranging from $1/2$ to $1/6$. As the ratio increases, \algo{} and other graph-based algorithms exhibit distinct trends in performance across different datasets, with an overall trend of initial improvement followed by a decline. Specifically, for {\em Yelp2018} and {\em AmazonBook}, \algo{}'s performance hits a nadir at ratio $1/2$ and peaks around $1/4$ or $1/5$, while for {\em Beauty} and {\em Gowalla}, it rises initially then drops as the ratio increases. Unlike hashing methods, graph-based approaches rely on sufficiently large communities to leverage user/item homogeneity before collisions influence performance. Therefore, we recommend setting the compression ratio to at least $1/5$, as lower values can degrade performance to random hashing levels.

\begin{table}[!t]
\centering
\caption{Impact of Weighting Schemes.}
\label{tab:weighting}
\vspace{-2ex}
\begin{small}
\addtolength{\tabcolsep}{-0.4em}
\resizebox{\columnwidth}{!}{
\begin{tabular}{c|*{4}{ccc}}
\toprule
\multirow{2}{*}{\makecell{Weighting\\ Scheme}} & 
\multicolumn{2}{c}{Beauty} &
\multicolumn{2}{c}{Gowalla} & 
\multicolumn{2}{c}{Yelp2018} & 
\multicolumn{2}{c}{AmazonBook} \\
 \cmidrule(lr){2-3} \cmidrule(lr){4-5} \cmidrule(lr){6-7} \cmidrule(lr){8-9}
\multicolumn{1}{c|}{}  & \multicolumn{1}{c}{R@20\textuparrow} & \multicolumn{1}{c}{N@20\textuparrow}  & \multicolumn{1}{c}{R@20\textuparrow} & \multicolumn{1}{c}{N@20\textuparrow} & \multicolumn{1}{c}{R@20\textuparrow} & \multicolumn{1}{c}{N@20\textuparrow} & \multicolumn{1}{c}{R@20\textuparrow} & \multicolumn{1}{c}{N@20\textuparrow} \\
\midrule
 \texttt{Louvain} (HWS)
& 8.551 & 4.480
& 15.844 & 9.974
& 6.504 & 4.088
& 7.754 & 5.314 \\   

 \texttt{Louvain} (CPM)
& 8.489 & 4.417
& 16.008 & 10.364
& 7.211 & 4.588
& 6.999 & 4.747 \\   

 \texttt{Leiden} (HWS)
& 8.559 & 4.455 
& 15.705 & 10.007
& 6.563 & 4.158
& \cellcolor{blue!30}7.760 & \cellcolor{blue!30}5.314\\ 

 \texttt{Leiden} (CPM)
& \cellcolor{blue!30}8.725 & \cellcolor{blue!30}4.492
& 15.888 & 10.258
& 7.152 & 4.576
& 7.066 & 4.825 \\ \cline{1-9}

 Modularity
& 8.452 & 4.380 
& 16.246 & 10.355 
& 7.296 & 4.661
& 7.566 & 5.164 \\

 CPM
& 8.159 & 4.169 
& 16.009 & 10.262
& 6.849 & 4.341 
& 7.322 & 5.111 \\ 

 reverse HWS
& 8.159 & 4.169 
& 15.935 & 10.147
& 7.116 & 4.461
& 7.261 & 5.028\\  

\cline{1-9}

\algo{} w/o SCU
& 8.619 & 4.406 
& \cellcolor{blue!30}16.393 & \cellcolor{blue!30}10.537
& \cellcolor{blue!30}7.344 & \cellcolor{blue!30}4.679 
& 7.669 & 5.237\\
\bottomrule
\end{tabular}
}
\end{small}
\end{table}

\subsection{Impact of Weighting Schemes}\label{sec:exp-weighting}

Table~\ref{tab:weighting} presents the recall and NDCG results achieved by \algo{}, integrated with various weighting schemes as summarized in Table~\ref{tbl:framework}, within our unified framework. Overall, HWS consistently outperforms competing methods in most cases, indicating that our scheme is better suited for real-world recommendation scenarios. Specifically, when combined with HWS, \algo{} achieves average improvements of $0.12\%$ in recall and $0.07\%$ in NDCG across datasets compared to other strategies. Notably, on {\em AmazonBook}, all algorithms achieve their best performance when combined with HWS. In particular, the \texttt{Leiden} algorithm achieves the highest performance, demonstrating that HWS can be effectively integrated into various frameworks to further enhance their effectiveness. The only exception occurs on the small-scale {\em Beauty} dataset, where HWS is slightly outperformed by CPM, yet it still surpasses the baselines in Table~\ref{tab:results}.

\begin{table}[!t]
\centering
\caption{Impact of Secondary Clusters for Users.}
\label{tab:post-refine}
\vspace{-2ex}
\begin{small}
\addtolength{\tabcolsep}{-0.4em}
\resizebox{\columnwidth}{!}{
\begin{tabular}{c|*{4}{ccc}}
\toprule
\multirow{2}{*}{\makecell{Weighting\\ Scheme}} & 
\multicolumn{2}{c}{Beauty} &
\multicolumn{2}{c}{Gowalla} & 
\multicolumn{2}{c}{Yelp2018} & 
\multicolumn{2}{c}{AmazonBook} \\
 \cmidrule(lr){2-3} \cmidrule(lr){4-5} \cmidrule(lr){6-7} \cmidrule(lr){8-9}
\multicolumn{1}{c|}{}  & \multicolumn{1}{c}{R@20\textuparrow} & \multicolumn{1}{c}{N@20\textuparrow}  & \multicolumn{1}{c}{R@20\textuparrow} & \multicolumn{1}{c}{N@20\textuparrow} & \multicolumn{1}{c}{R@20\textuparrow} & \multicolumn{1}{c}{N@20\textuparrow} & \multicolumn{1}{c}{R@20\textuparrow} & \multicolumn{1}{c}{N@20\textuparrow} \\
\midrule
 \texttt{GraphHash} w/ SCU
& 9.073 & \cellcolor{blue!30}4.725
& 15.856 & 9.930
& 6.567 & 4.117
& 7.655 & 5.108 \\   

 \texttt{Leiden} w/ SCU
& 8.910 & 4.670
& 15.924 & 10.041
& 6.530 & 4.084
& 7.605 & 5.101 \\ 

 \texttt{LP} w/ SCU
& 8.428 & 4.366
& 11.517 & 7.292
& 5.120 & 3.186
& 3.540 & 2.323 \\ 

\cline{1-9}

\cline{1-9}

\algo{} w/o SCU
& 8.619 & 4.406
& 16.393 & 10.537
& 7.344 & 4.679
& 7.669 & 5.237\\ 

 \algo{} w/ SCI
& 8.530 & 4.306
& 15.892 & 10.195
& 7.209 & 4.544
& 7.393 & 5.084 \\  

 \algo{} w/ SCU \& SCI
& 8.982 & 4.555
& 16.503 & 10.476
& \cellcolor{blue!30}7.648 & \cellcolor{blue!30}4.844
& 7.801 & 5.214 \\  \cline{1-9}

\algo{}
& \cellcolor{blue!30}9.079 & 4.574
& \cellcolor{blue!30}16.692 & \cellcolor{blue!30}10.674
& 7.510 & 4.763 
& \cellcolor{blue!30}7.892 & \cellcolor{blue!30}5.240\\

\bottomrule
\end{tabular}
}
\end{small}
\end{table}

\subsection{Impact of Secondary Clusters for Users}\label{sec:exp-SCU}

 As illustrated in Table~\ref{tab:post-refine}, the integration of secondary clusters for users in \algo{} significantly reduces hash collisions, resulting in average improvements of $0.29\%$ in recall and $0.10\%$ in NDCG over the version without secondary clusters (w/o SCU). Particularly, on {\em Beauty}, \texttt{GraphHash} equipped with SCU yields notable gains of $0.9\%$ in recall and $0.5\%$ in NDCG, whereas using random secondary labels in \texttt{DoubleGraph} leads to a drop in performance. Notably, introducing secondary item clusters (w/ SCI) generally results in diminished performance, likely due to the dilution of user-level personalization as numerous items are added to clusters. In contrast, SCU sharpens user differentiation. Including both SCU and SCI decreases performance versus SCU alone, since excessive secondary labels may lead to potential conflicts. An exception occurs on {\em Yelp2018}, where \algo{} with both SCU and SCI fortuitously aligns with the dataset’s structure. Our results indicate that employing SCU alone is both sufficient and appropriate for all methods, offering a more reasonable approach than employing the double hash trick.

\section{Conclusion}
This paper presents \algo{}, a simple, fast, and effective framework to compress embedding tables through balanced co-clustering of users and items.
Through maximizing intra-cluster connectivity and enforcing balanced cluster sizes, \algo{} alleviates embedding collisions and codebook collapse, as well as aligns with established graph clustering methodologies.
We further introduce a hybrid weighting scheme, an efficient label-propagation solver that sidesteps resolution limits, and secondary user clusters that expand representational capacity.
Our experiments and ablations demonstrate competitive recommendation quality, substantial parameter reductions, and considerable speedups against strong baselines.

\begin{acks}
This work is partially supported by the National Natural Science Foundation of China (No. 62302414), the Hong Kong RGC YCRG (No. C2003-23Y), and Guangdong and Hong Kong Universities ``1+1+1'' Joint Research Collaboration Scheme, project No.: 2025A0505000002.
\end{acks}

\balance
\bibliographystyle{ACM-Reference-Format}
\bibliography{sample-base}

\pagebreak
\appendix
\section{Theoretical Proofs}

\begin{proof}[\bf Proof of Eq.~\eqref{eq:lasso}]
First, we expand ${\sum_{k=1}^K\left(\textsf{vol}(C_k)-\frac{W^{(u)}+W^{(v)}}{K}\right)^2}$:
\begin{small}
\begin{align*}
&{\sum_{k=1}^K\left[\textsf{vol}(C_k)^2+\frac{(W^{(u)}+W^{(v)})^2}{K^2}-\frac{2\textsf{vol}(C_k)\cdot (W^{(u)}+W^{(v)})}{K}\right]}.
\end{align*}
\end{small}
It can be further reorganized as
\begin{small}
\begin{align*}
& = {\sum_{k=1}^K\textsf{vol}(C_k)^2+\sum_{k=1}^K \frac{(W^{(u)}+W^{(v)})^2}{K^2}-\sum_{k=1}^K 2\textsf{vol}(C_k) \frac{(W^{(u)}+W^{(v)})}{K}} \\
& = \sum_{k=1}^K\textsf{vol}(C_k)^2+\frac{(W^{(u)}+W^{(v)})^2}{K}-\frac{2(W^{(u)}+W^{(v)})^2}{K},
\end{align*}
\end{small}
i.e., $\sum_{k=1}^K\textsf{vol}(C_k)^2-\frac{2(W^{(u)}+W^{(v)})^2}{K}$. Since $\frac{2(W^{(u)}+W^{(v)})^2}{K}$ is a constant, $\underset{C_1,\ldots,C_K}{\min}{\sum_{k=1}^K\left(\textsf{vol}(C_k)-\frac{W^{(u)}+W^{(v)}}{K}\right)^2} \Leftrightarrow \underset{C_1,\ldots,C_K}{\min}{\sum_{k=1}^K \textsf{vol}(C_k)^2}$.
By Eq.~\eqref{eq:vol}, $\sum_{k=1}^K \textsf{vol}(C_k)^2 = \sum_{k=1}^K \left(\underset{u_i\in \C_k\cap \U}{\sum}{w^{(u)}_i} + \underset{v_j\in \C_k\cap \V}{\sum}{{w^{(v)}_j}},\right)^2$ $=\|\fvec^\top\YM\|_2^2=\mathtt{Trace}(\YM^\top\fvec\fvec^\top\YM),$
which completes the proof.
\end{proof}

\begin{proof}[\bf Proof of Lemma~\ref{lem:louvain}]
By the definition of bipartite modularity by \citet{barber2007modularity} (Eq. ~\eqref{eq:bmod}), we can derive that $\sum_{k=1}^K{\left( s_k - \gamma\cdot\frac{\sigma^{(u)}_k\cdot \sigma^{(v)}_k}{|\EDG|}\right)}  = \sum_{k=1}^K{\left( \sum_{u_i, v_j\in \C_k}{\BM_{i,j}} - \gamma\cdot\frac{\sum_{u_i\in \U_k}{d(u_i)}
\cdot \sum_{v_j\in \V_k}{d(v_j)}}{|\EDG|}\right)}$,
leading to
\begin{small}
\begin{align*}
& \sum_{k=1}^K{\left( s_k - \gamma\cdot\frac{\sigma^{(u)}_k\cdot \sigma^{(v)}_k}{|\EDG|}\right)} = \sum_{k=1}^K{\sum_{u_i, v_j\in \C_k}\left( {\BM_{i,j}} - \gamma\cdot\frac{{d(u_i)}
\cdot {d(v_j)}}{|\EDG|}\right)} \\
& = \sum_{u_i\in \U} \sum_{v_j\in \V}{\left(\BM_{i,j}-\gamma\cdot \frac{d(u_i)\cdot d(v_j)}{|\EDG|}\right)\cdot \delta(u_i,v_j)}.
\end{align*}
\end{small}
Likewise, from the definition of bipartite CPM~\cite{traag2011narrow}, it follows that
\begin{small}
\begin{align*}
&\sum_{k=1}^K{\left( s_k - \gamma \cdot |\U_k|\cdot|\V_k|\right)} = \sum_{k=1}^K{\left( \sum_{u_i, v_j\in \C_k}{\BM_{i,j}} - \gamma\cdot\sum_{u_i\in \U_k}{1} \cdot \sum_{v_i\in \V_k}{1}\right)} \\
& = \sum_{k=1}^K{ \sum_{u_i, v_j\in \C_k}\left({\BM_{i,j}} - \gamma\cdot{1}\right)} =\sum_{u_i\in \U} \sum_{v_j\in \V}{\left(\BM_{i,j}-\gamma\right)\cdot \delta(u_i,v_j)}.
\end{align*}
\end{small}

As for \texttt{LPAb}~\cite{barber2009detecting}, it basically leverages label propagation to optimize the modularity by assigning a new label to $u_i$ as follows: $l(u_i) = \argmax{1\le k\le K}{\sum_{v_j\in \C_k}{\BM_{i,j} - \frac{\gamma}{\EDG} \cdot d(u_i)\cdot d(v_j) }}$. It is equivalent to $\max{\sum_{u_i\in\U}\sum_{v_j\in\V} \left( \BM_{i,j} - \gamma \cdot \frac{d(u_i)\cdot d(v_j)}{\EDG}\right)}$,
which locally maximizes the function and together sum up to an overall objective matching our final form.
\end{proof}

\begin{proof}[\bf Proof of Lemma~\ref{lem:LP}]
According to \cite{raghavan2007near,barber2009detecting}, each step in \texttt{LP} determines the cluster label of target node $u_i$ by the following way: $k^\ast = \argmax{1\le k\le K}{\sum_{v_j\in \C_k\cap \N(u_i)}{\delta(u_i,v_j)}}$,
which is to locally maximize $\sum_{v_j\in \N(u_i)}{\BM_{i,j}\cdot \delta(u_i,v_j)}$ for node $u_i$. If we consider all nodes in $\U$, it leads to an overall objective of \texttt{LP}:
\begin{small}
\begin{align*}
\max_{\YM}\sum_{u_i\in \U} \sum_{v_j\in \N(u_i)}{\BM_{i,j}\cdot \delta(u_i,v_j)} = \max_{\YM}\sum_{u_i\in \U}\sum_{v_j\in \V}{\BM_{i,j}\cdot \delta(u_i,v_j)}.
\end{align*}
\end{small}

Since \texttt{SCC} is the version of spectral clustering on bipartite graphs, whose objective function is often framed as a trace minimization problem: $\min_{\YM}\textsf{Trace}(\YM^\top(\DM-\AM)\YM) \Leftrightarrow \min_{\YM}\textsf{Trace}(\YM^\top\DM\YM) - \textsf{Trace}(\YM^\top\AM\YM)$, 
which can be simplified as $\max_{\YM}\textsf{Trace}(\YM^\top\AM\YM)$ since $\textsf{Trace}(\YM^\top\DM\YM)=\sum_{k=1}^K\sum_{x\in \C_k}{d(x)}=|\EDG|$ is a constant. This objective can also be rewritten as $\max_{\YM}{\sum_{u_i\in \U} \sum_{v_j\in \N(u_i)}{\BM_{i,j}\cdot \delta(u_i,v_j)}}$ by Eq.~\eqref{eq:trace-to-sum}.
\end{proof}

\section{Detailed Related Works}\label{sec:add-rw}
\subsection{Co-Clustering}
Co-clustering seeks to reveal associations between row and column clusters through the simultaneous clustering of both dimensions in a data matrix. Initially,~\citet{hartigan1972direct} advocated co-clustering of variables and cases, enabling direct interpretation of the resulting clusters.~\citet{dhillon2001co} established a connection between co-clustering and spectral graph partitioning. Regarding the spectral co-clustering algorithm,~\citet{kluger2003spectral} enabled different cluster numbers per dimension,~\citet{gao2005consistent} resolved higher-order problems, and~\citet{shi2010efficient} integrated constraint information. Employing information theory transforms co-clustering into an optimization problem, where different association measures yield distinct algorithms: ~\cite{govaert1995simultaneous,hartigan1972direct} utilizes a least-squares criterion, \texttt{ITCC}~\cite{dhillon2003information} maximizes mutual information, and \texttt{BCC}~\cite{banerjee2004generalized} optimizes the Bregman divergence. Numerous model-based and matrix factorization-based algorithms have also been proposed, as reviewed in detail in previous surveys~\cite{Wang2023asurvey,battaglia2024co}. Recently, co-clustering techniques have accelerated improvements in recommender systems. For instance,~\citet{wu2016explaining} incorporated sampling techniques to accelerate recommendations, while~\citet{feng2020improving} improved accuracy by limiting recommended items to the same cluster. However, these methods are typically employed to supplement information and enhance accuracy, with limited focus on addressing the memory constraints of embedding tables.

\subsection{Bipartite Graph Clustering}
Bipartite graph clustering is a method for uncovering underlying structural properties in diverse relationship networks~\cite{yang2024effective}. A straightforward approach is to transform the bipartite graph into a unipartite graph, thereby allowing the application of conventional graph clustering methods, e.g.,~\cite{newman2004finding,von2007tutorial}. Moreover, projection-based methods~\cite{melamed2014community,tackx2017comsim,yang2024efficient} generate a unipartite graph to yield higher-quality clusters, but this often results in a much denser graph structure. Approaches specifically developed for bipartite graph clustering include spectral clustering~\cite{dhillon2001co,kluger2003spectral}, statistical modeling~\cite{larremore2014efficiently,yen2020community}, and graph embeddings~\cite{yang2022scalable}. However, the aforementioned methods often involve considerable computational overhead, whereas modularity maximization~\cite{blondel2008fast,traag2019louvain,kim2022abc} and label propagation~\cite{raghavan2007near,taguchi2020bimlpa} methods are recognized for their computational efficiency. Its influence diffusion-based structure enables continuous and enhanced information exchange between users and items, making it particularly well-suited for recommendation systems. Recently,~\citet{wu2025graphhash} innovatively exploited user-item interaction graphs to compress embedding tables for recommendation tasks. However, their approach simply applies modularity maximization, without adequately considering the unique data characteristics and clustering biases of recommender systems.

\section{Additional Experimental Details}\label{sec:add-exp-details}

\subsection{Datasets Details}
We conduct our experiments on four benchmark datasets, each widely utilized in recommendation research~\cite{wu2025graphhash,he2020lightgcn,wang2022towards} and real-world scenarios. The datasets are detailed as follows:
\begin{itemize}[leftmargin=*]
\item {\em Beauty}: A subset of Amazon product reviews, encompassing user interactions of beauty products.
\item {\em Gowalla}: A check-in dataset capturing user location-sharing behaviors on the Gowalla platform.
\item {\em Yelp2018}: Extracted from the 2018 Yelp Challenge, this dataset contains user interactions with local businesses.
\item {\em AmazonBook}: A subset of Amazon product reviews, containing user interactions with books.
\end{itemize}

\subsection{Parameter Settings}\label{sec:param-set}
\begin{table}[!ht]
\small
\centering
\caption{Parameter setting in \algo{}}
\label{tab:param-gamma-BACO}
\vspace{-3ex}
\setlength{\tabcolsep}{3.5pt}
\begin{tabular}{@{}c*{6}{c}@{}}
\toprule
\cmidrule(l{3pt}r{3pt}){2-5}
Parameter & {\em Beauty} & {\em Gowalla} & {\em Yelp2018} & {\em AmazonBook}\\
\midrule
$\gamma$ & 0.13 & 7.57 & 5.50 & 4.73 \\
\bottomrule
\end{tabular}
\vspace{-2ex}
\end{table}
\begin{figure}[!ht]
\centering
\begin{small}
\begin{tikzpicture}
    \begin{customlegend}
    [legend columns=4,
        legend entries={{\em Beauty},{\em Gowalla}, {\em Yelp2018}, {\em AmazonBook}},
        legend style={at={(0.45,1.35)},anchor=north,draw=none,font=\footnotesize,column sep=0.2cm}]
    \addlegendimage{line width=0.7mm,color=mycyan}
    \addlegendimage{line width=0.7mm,color=NSCcol1}
    \addlegendimage{line width=0.7mm,color=mypurple}
    \addlegendimage{line width=0.7mm,color=myblue}
    \end{customlegend}
\end{tikzpicture}
\\[-\lineskip]
\begin{tikzpicture}[scale=1,every mark/.append style={mark size=1.3pt}]
    \begin{axis}[
        height=\columnwidth/2.418,
        width=\columnwidth/1.5,
        ylabel={\it Paras ratio},
        xmin=0,
        ymin=0, ymax=100,
        xtick={0,1,2,3,4,5,6,7,8},
        ytick={0,25,50,75,100},
        xticklabel style = {font=\scriptsize},
        yticklabel style = {font=\footnotesize},
        yticklabels={$0\%$,$25\%$,$50\%$,$75\%$,$100\%$},
        every axis y label/.style={font=\footnotesize,at={(current axis.north west)},right=2mm,above=0mm},
        legend style={fill=none,font=\small,at={(0.02,0.99)},anchor=north west,draw=none},
    ]
    \addplot[line width=0.4mm, smooth, color=mycyan]  %
        plot coordinates {
            (0, 100)
            (1, 59.9)
            (2, 45.9)
            (3, 40.9)
            (4, 34.0)
            (5, 23.9)
            (6, 13.9)
            (7, 7.7)
            (8, 4.6)

        };
    \addplot[line width=0.4mm, smooth, color=NSCcol1]  %
        plot coordinates {
            (0, 100)
            (1, 49.0)
            (2, 31.5)
            (3, 23.6)
            (4, 19.0)
            (5, 16.4)
            (6, 14.8)
            (7, 14.0)
            (8, 13.5)

        };
    \addplot[line width=0.4mm, smooth, color=mypurple]  %
        plot coordinates {
            (0, 100)
            (1, 45.1)
            (2, 31.4)
            (3, 26.7)
            (4, 23.8)
            (5, 21.9)
            (6, 20.8)
            (7, 20.0)
            (8, 19.4)
        };
    \addplot[line width=0.4mm, smooth, color=myblue]  %
        plot coordinates {
            (0, 100)
            (1, 41.5)
            (2, 26.0)
            (3, 19.3)
            (4, 15.6)
            (5, 13.6)
            (6, 12.4)
            (7, 11.7)
            (8, 11.2)
        };
    \end{axis}
\end{tikzpicture}\hspace{2mm}%
\end{small}
 \vspace{-3mm}
\caption{Embedding table parameters ratio of \algo{} versus iteration count.} \label{fig:convergence}
\vspace{-1ex}
\end{figure}
In this section, we present the parameters not detailed in the main text. We utilize the Adam~\cite{kingma2014adam} optimizer with a learning rate of 0.001 and a mini-batch size of 1024, and an embedding dimension of 64 across all datasets. Training is conducted for up to 1000 epochs, with early stopping(patience of 50 epochs) and validation strategies employed to prevent overfitting.

Following the settings in~\cite{wu2025graphhash}, we assign half of the hash bins to the highest-frequency entities in both the \texttt{Frequency Hashing}~\cite{ghaemmaghami2022learning} and \texttt{Hybrid Hashing}, and employ the user-item interaction graph as the feature in \texttt{LSH}~\cite{datar2004locality}.
In addition, \texttt{CCE}~\cite{tsang2023clustering} and \texttt{LEGCF}~\cite{liang2024lightweight} require dynamic updates, but for fairness, their sketching matrices are updated only in the first epoch, while hash labels for other methods are fixed before training.
We retain the original implementations of \texttt{infomap}~\cite{rosvall2008maps}, \texttt{BiMLPA}~\cite{taguchi2020bimlpa}, \texttt{BRIM}~\cite{platig2016bipartite}, as these methods inherently perform adaptive cluster detection without explicit control over the number of communities.

For a fair comparison, we adjust our parameter $\gamma$ to match the target size of the embedding table,
as in Table~\ref{tab:param-gamma-BACO}. As shown in Figure~\ref{fig:convergence}, our empirical study confirms that the parameter converges at an exponential rate. The parameter achieves around 20\% compression ratio and tends to be stable after $5$ iterations. Thereon, we fix the parameter $T$ to $5$.

\subsection{Evaluation Metrics}
Given a set of $\C=\{\C_1,\C_2,\ldots,\C_K\}$ of $K$ disjoint co-clusters, each containing both users and items, we provide the formal mathematical definitions of the {\em averaged cross-cluster links}(ACCL) and the {\em Gini coefficient} in Figure~\ref{fig:preliminary} as follows:
\begin{small}
\begin{equation*}
\textnormal{ACCL}=\frac{\underset{\C_k, \C_\ell \in \C}{\sum}\underset{\ u_i\in \C_k, v_j \in \C_\ell}{\sum}{\BM_{i,j}}}{\binom{K}{2}},\ 
\textnormal{Gini}=\frac{2}{K}\cdot \sum_{i=1}^{K} \left( \frac{i}{K} - \frac{\sum_{j=1}^{i}|\C_j|}{\sum_{k=1}^{K}|\C_k|} \right).
\end{equation*}  
\end{small}

\section{Additional Experimental Results}

\begin{table*}[!t]
\centering
\caption{Recommendation Performance ($k=10$ or $50$). Best results highlighted in {\color{blue!30}blue} and runner-up \underline{underlined}.}
\label{tab:more-result}
\vspace{-2ex}
\begin{small}
\renewcommand{\arraystretch}{0.9}
\addtolength{\tabcolsep}{-0.4em}
\resizebox{\textwidth}{!}{
\begin{tabular}{c|*{4}{cccc}}
\toprule
\multirow{2}{*}{Method} & 
\multicolumn{4}{c}{Beauty} & 
\multicolumn{4}{c}{Gowalla} & 
\multicolumn{4}{c}{Yelp2018} & 
\multicolumn{4}{c}{AmazonBook} \\
 \cmidrule(lr){2-5} \cmidrule(lr){6-9} \cmidrule(lr){10-13} \cmidrule(lr){14-17}
\multicolumn{1}{c|}{} & 
\multicolumn{1}{c}{R@10\textuparrow} & \multicolumn{1}{c}{N@10\textuparrow} &  \multicolumn{1}{c}{R@50\textuparrow} & \multicolumn{1}{c}{N@50\textuparrow} & \multicolumn{1}{c}{R@10\textuparrow} & \multicolumn{1}{c}{N@10\textuparrow} & 
\multicolumn{1}{c}{R@50\textuparrow} & \multicolumn{1}{c}{N@50\textuparrow} & \multicolumn{1}{c}{R@10\textuparrow} & \multicolumn{1}{c}{N@10\textuparrow} & \multicolumn{1}{c}{R@50\textuparrow} & \multicolumn{1}{c}{N@50\textuparrow} & \multicolumn{1}{c}{R@10\textuparrow} & \multicolumn{1}{c}{N@10\textuparrow} & \multicolumn{1}{c}{R@50\textuparrow} & \multicolumn{1}{c}{N@50\textuparrow} \\
\midrule

 Full Model
& 7.619 & 4.814 & 16.343 & 6.998
& 12.917 & 9.837 & 29.010 & 14.538
& 5.602 & 4.334 & 16.334 & 7.796
& 5.689 & 4.542 & 15.093 & 7.603
\\ \cline{1-17}

 \texttt{Random}
& 3.054 & 1.874 & 7.555 & 3.009 
& 6.368 & 5.015 & 15.097 & 7.588
& 3.048 & 2.414 & 9.055 & 4.375
& 1.677 & 1.445 & 4.679 & 2.448\\

 \texttt{Frequency Hashing}~\cite{ghaemmaghami2022learning} 
& 2.926 & 1.763 & 7.637 & 2.945
& 5.852 & 4.597 & 14.331 & 7.108
& 3.071 & 2.451 & 9.262 & 4.482
& 1.459 & 1.243 & 4.095 & 2.127\\

 \texttt{Double Hashing}~\cite{zhang2020model}
& 2.872 & 1.717 & 7.035 & 2.779
& 6.491 & 5.133 & 15.172 & 7.705
& 3.170 & 2.504 & 9.542 & 4.587
& 1.626 & 1.374 & 4.393 & 2.304\\

 \texttt{Hybrid Hashing}~\cite{zhang2020model}
& 3.028 & 1.786 & 7.704 & 2.976
& 6.901 & 5.451 & 16.061 & 8.151
& 3.856 & 3.112 & 11.172 & 5.486
& 1.726 & 1.418 & 4.892 & 2.471\\

 \texttt{LSH}~\cite{datar2004locality}
& 3.613 & 2.179 & 8.720 & 3.470
& 6.481 & 5.191 & 14.244 & 7.511
& 2.858 & 2.255 & 8.274 & 4.033
& 2.287 & 1.899 & 6.260 & 3.212\\

\texttt{CCE}~\cite{tsang2023clustering}
& 4.953 & 2.993 & 11.478 & 4.629
& 7.391 & 5.844 & 16.710 & 8.586
& 3.533 & 2.808 & 10.148 & 4.960
& 2.117 & 1.710 & 5.937 & 2.975\\

 \texttt{LEGCF}~\cite{liang2024lightweight}
& 4.989 & 2.914 & 12.391 & 4.754
& 5.413 & 4.046 & 13.230 & 6.319
& 2.747 & 2.071 & 8.060 & 3.795
& - & -  & - & - \\ \cline{1-17}

 \texttt{GraphHash}~\cite{wu2025graphhash}
& 5.776 & 3.579 & 12.294 & 5.224
& 10.672 & 8.074 & \underline{24.442} & 12.112
& 3.863 & 3.013 & 11.709 & 5.542
& 5.081 & 4.264 & 11.851 & 6.477\\

 \texttt{DoubleGraphHash}~\cite{wu2025graphhash}
& 3.883 & 2.439 & 9.307 & 3.826
& 8.711 & 6.720 & 19.949 & 10.007
& 3.245 & 2.529 & 9.708 & 4.635
& 3.708 & 2.778 & 9.221 & 4.519\\

 \texttt{LP}~\cite{raghavan2007near}
& 5.465 & 3.415 & 12.162 & 5.094
& 8.026 & 6.253 & 19.144 & 9.481
& 3.358 & 2.636 & 10.291 & 4.868
& 2.278 & 1.830 & 6.435 & 3.181\\

 \texttt{Leiden}~\cite{traag2019louvain}
& 5.540 & 3.446 & 12.165 & 5.108
& \underline{10.753} & \underline{8.203} & 24.245 & \underline{12.152}
& 3.898 & 3.000 & 11.764 & 5.536
& \underline{5.174} & \underline{4.300} & \underline{11.927} & \underline{6.513}\\

 \texttt{EBMD}~\cite{kim2022abc}
& 5.443 & 3.290 & 11.961 & 4.949
& 7.061 & 5.501 & 16.816 & 8.352
& 3.080 & 2.407 & 9.628 & 4.525
& 2.182 & 1.780 & 6.052 & 3.058 \\

 \texttt{infomap}~\cite{rosvall2008maps}
& 2.922 & 1.767 & 7.521 & 2.893
& 4.085 & 2.997 & 9.511 & 4.556
& 2.690 & 2.081 & 8.074 & 3.827
& 0.749 & 0.584 & 2.248 & 1.067\\

 \texttt{BiMLPA}~\cite{taguchi2020bimlpa} 
& 2.293 & 1.295 & 6.806 & 2.381
& 4.423 & 3.262 & 10.430 & 4.986
& 2.751 & 2.140 & 8.284 & 3.930
& 0.699 & 0.545 & 2.106 & 0.996\\

  \texttt{BRIM}~\cite{platig2016bipartite}
& 5.418 & 3.236 & 12.875 & 5.082
& 5.595 & 4.378 & 12.935 & 6.518
& 2.702 & 2.080 & 8.025 & 3.809
& 1.438 & 1.121 & 4.387 & 2.081\\ \cline{1-17}

 \texttt{SCC}~\cite{dhillon2001co} 
& \underline{5.920} & \underline{3.640} & \underline{13.072} & \underline{5.456}
& 9.667 & 7.467 & 21.880 & 11.016
& \underline{4.167} & \underline{3.224} & \underline{12.246} & \underline{5.832}
& 3.622 & 2.909 & 9.869 & 4.941 \\

 \texttt{SBC}~\cite{kluger2003spectral} 
& 4.360 & 2.630 & 10.506 & 4.192
& 7.220 & 5.665 & 17.084 & 8.522
& 3.060 & 2.377 & 9.164 & 4.354
& 2.389 & 1.944 & 6.508 & 3.300\\

 \texttt{ITCC}~\cite{dhillon2003information}
& 3.096 & 1.925 & 7.648 & 3.090
& 5.461 & 4.419 & 12.163 & 6.439
& 2.788 & 2.258 & 8.291 & 4.069
& 1.753 & 1.471 & 4.701 & 2.465 \\ \cline{1-17}

\algo{}
& \cellcolor{blue!30}6.308 & \cellcolor{blue!30}3.733 & \cellcolor{blue!30}14.036 & \cellcolor{blue!30}5.643
& \cellcolor{blue!30}11.701 & \cellcolor{blue!30}9.052 & \cellcolor{blue!30}26.429 & \cellcolor{blue!30}13.344
& \cellcolor{blue!30}4.790 & \cellcolor{blue!30}3.767 & \cellcolor{blue!30}13.739 & \cellcolor{blue!30}6.668
& \cellcolor{blue!30}5.357 & \cellcolor{blue!30}4.315 & \cellcolor{blue!30}13.221 & \cellcolor{blue!30}6.868 \\

 v.s. Baselines
& +0.388 & +0.093 & +0.964 & +0.187
& +0.948 & +0.849 & +1.987 & +1.192
& +0.623 & +0.543 & +1.493 & +0.836
& +0.183 & +0.015 & +1.294 & +0.355 \\

 v.s. Full Model
& -1.311 & -1.081 & -2.307 & -1.355
& -1.216 & -0.785 & -2.581 & -1.194
& -0.812 & -0.568 & -2.596 & -1.127
& -0.332 & -0.227 & -1.872 & -0.734\\

\bottomrule
\end{tabular}
}
\end{small}
\end{table*}

\subsection{Recommendation Performance Evaluation}
To comprehensively assess the performance of \algo{}, we additionally report Recall@$K$ and NDCG@$K$ for $K=10$ and $K=50$, using the same embedding table size as shown in Table~\ref{tab:results}. As shown in Table~\ref{tab:more-result}, \algo{} consistently outperforms all baseline approaches across all datasets, with improvements of up to $1.987\%$ in Recall and $1.192\%$ in NDCG, achieving substantial improvements in both evaluation settings. Overall, these results demonstrate the substantial effectiveness of \algo{} under both strict and relaxed scenarios.

\begin{figure}[!t]
\centering
\begin{small}
\begin{tikzpicture}
   \hspace{1mm}\begin{customlegend}[
        legend entries={\texttt{GraphHash},\texttt{Leiden},\texttt{BACO},\texttt{full}},
        legend columns=5,
        area legend,
        legend style={at={(0.45,1.25)},anchor=north,draw=none,font=\footnotesize,column sep=0.2cm}]
        \addlegendimage{pattern color=white, preaction={fill, NSCcol1}, pattern={north west lines}} 
        \addlegendimage{pattern color=white, preaction={fill, teal}, pattern={crosshatch dots}}
        \addlegendimage{pattern color=white, preaction={fill, mycyan}} 
        \addlegendimage{pattern color=white, preaction={fill, myblue}, pattern={crosshatch}}   
    \end{customlegend}
\end{tikzpicture}
\\[-8pt]
\vspace{-1ex}
\subfloat[{\em Gowalla (Recall)}]{
\begin{tikzpicture}[scale=1]
\begin{axis}[
    height=3cm,
    width=4.6cm,
    xtick=\empty,
    ybar=0.6pt,
    bar width=0.16cm,
    enlarge x limits=0.15,
    ylabel={\em R@20},
    ymin=12,
    ymax=22,
    ytick={12,14,16,18,20,22},
    yticklabels={$12$,$14$,$16$,$18$,$20$,$22$},
    xticklabel style = {font=\scriptsize},
    yticklabel style = {font=\scriptsize},
    symbolic x coords={0-25\%,25-50\%,50-75\%,75-100\%},
    xtick = data,
    tick align=inside,
    group style={group size=5 by 1, horizontal sep=0.7cm},
    every axis y label/.style={at={(current axis.north west)},right=5mm,above=0mm},
    legend style={draw=none, at={(1.02,1.02)},anchor=north west,cells={anchor=west},font=\scriptsize},
    legend image code/.code={ \draw [#1] (0cm,-0.1cm) rectangle (0.3cm,0.15cm);},
    ]

\addplot [pattern color=white, preaction={fill, NSCcol1}, pattern={north west lines}] coordinates
    {(0-25\%,16.97) (25-50\%,16.74) (50-75\%,15.60) (75-100\%,12.73)};
\addplot [pattern color=white, preaction={fill, teal}, pattern={crosshatch dots}] coordinates
    {(0-25\%,17.35) (25-50\%,16.34) (50-75\%,15.54) (75-100\%,12.88)};
\addplot [pattern color=white, preaction={fill, mycyan}] coordinates
    {(0-25\%,18.91) (25-50\%,17.78) (50-75\%,16.89) (75-100\%,13.90)};
\addplot [pattern color=white, preaction={fill, myblue}, pattern={crosshatch}] coordinates
    {(0-25\%,21.01) (25-50\%,19.50) (50-75\%,18.35) (75-100\%,14.64)};

\end{axis}
\end{tikzpicture}\hspace{1mm}\label{fig:group-recall}%
}%
\subfloat[{\em Gowalla (NDCG)}]{
\begin{tikzpicture}[scale=1]
\begin{axis}[
    height=3cm,
    width=4.6cm,
    xtick=\empty,
    ybar=0.6pt,
    bar width=0.16cm,
    enlarge x limits=0.15,
    ylabel={\em N@20} ,
    ymin=9,
    ymax=12,
    ytick={9,9.5,10,10.5,11,11.5,12},
    yticklabels={$9.0$,$9.5$,$10.0$,$10.5$,$11.0$,$11.5$,$12.0$},
    xticklabel style = {font=\scriptsize},
    yticklabel style = {font=\scriptsize},
    symbolic x coords={0-25\%,25-50\%,50-75\%,75-100\%},
    xtick = data,
    tick align=inside,
    group style={group size=5 by 1, horizontal sep=0.7cm},
    every axis y label/.style={at={(current axis.north west)},right=5mm,above=0mm},
    legend style={draw=none, at={(1.02,1.02)},anchor=north west,cells={anchor=west},font=\scriptsize},
    legend image code/.code={ \draw [#1] (0cm,-0.1cm) rectangle (0.3cm,0.15cm);},
    ]

\addplot [pattern color=white, preaction={fill, NSCcol1}, pattern={north west lines}] coordinates
    {(0-25\%,9.35) (25-50\%,9.16) (50-75\%,9.54) (75-100\%,10.28)};
\addplot [pattern color=white, preaction={fill, teal}, pattern={crosshatch dots}] coordinates
    {(0-25\%,9.53) (25-50\%,9.1) (50-75\%,9.69) (75-100\%,10.46)};
\addplot [pattern color=white, preaction={fill, mycyan}] coordinates
    {(0-25\%,10.37) (25-50\%,10.05) (50-75\%,10.71) (75-100\%,11.43)};
\addplot [pattern color=white, preaction={fill, myblue}, pattern={crosshatch}] coordinates
    {(0-25\%,11.74) (25-50\%,11.01) (50-75\%,11.64) (75-100\%,11.84)};

\end{axis}
\end{tikzpicture}\hspace{1mm}\label{fig:group-ndcg}%
}%
\vspace{-1em}

\subfloat[{\em Yelp2018 (Recall)}]{
\begin{tikzpicture}[scale=1]
\begin{axis}[
    height=3cm,
    width=4.6cm,
    xtick=\empty,
    ybar=0.6pt,
    bar width=0.16cm,
    enlarge x limits=0.15,
    ylabel={\em R@20},
    ymin=5,
    ymax=10,
    ytick={5,6,7,8,9,10},
    yticklabels={5,6,7,8,9,10},
    xticklabel style = {font=\scriptsize},
    yticklabel style = {font=\scriptsize},
    symbolic x coords={0-25\%,25-50\%,50-75\%,75-100\%},
    xtick = data,
    tick align=inside,
    group style={group size=5 by 1, horizontal sep=0.7cm},
    every axis y label/.style={at={(current axis.north west)},right=5mm,above=0mm},
    legend style={draw=none, at={(1.02,1.02)},anchor=north west,cells={anchor=west},font=\scriptsize},
    legend image code/.code={ \draw [#1] (0cm,-0.1cm) rectangle (0.3cm,0.15cm);},
    ]

\addplot [pattern color=white, preaction={fill, NSCcol1}, pattern={north west lines}] coordinates
    {(0-25\%,6.48) (25-50\%,6.93) (50-75\%,6.72) (75-100\%,6.42)};
\addplot [pattern color=white, preaction={fill, teal}, pattern={crosshatch dots}] coordinates
    {(0-25\%,6.05) (25-50\%,6.21) (50-75\%,6.53) (75-100\%,6.15)};
\addplot [pattern color=white, preaction={fill, mycyan}] coordinates
    {(0-25\%,7.51) (25-50\%,7.91) (50-75\%,7.47) (75-100\%,7.31)};
\addplot [pattern color=white, preaction={fill, myblue}, pattern={crosshatch}] coordinates
    {(0-25\%,8.83) (25-50\%,9.63) (50-75\%,8.93) (75-100\%,8.11)};

\end{axis}
\end{tikzpicture}\hspace{1mm}\label{fig:more-group-recall}%
}%
\subfloat[{\em Yelp2018 (NDCG)}]{
\begin{tikzpicture}[scale=1]
\begin{axis}[
    height=3cm,
    width=4.6cm,
    xtick=\empty,
    ybar=0.6pt,
    bar width=0.16cm,
    enlarge x limits=0.15,
    ylabel={\em N@20} ,
    ymin=3,
    ymax=7,
    ytick={3,4,5,6,7},
    yticklabels={$9.0$,$9.5$,$10.0$,$10.5$,$11.0$,$11.5$,$12.0$},
    xticklabel style = {font=\scriptsize},
    yticklabel style = {font=\scriptsize},
    symbolic x coords={0-25\%,25-50\%,50-75\%,75-100\%},
    xtick = data,
    tick align=inside,
    group style={group size=5 by 1, horizontal sep=0.7cm},
    every axis y label/.style={at={(current axis.north west)},right=5mm,above=0mm},
    legend style={draw=none, at={(1.02,1.02)},anchor=north west,cells={anchor=west},font=\scriptsize},
    legend image code/.code={ \draw [#1] (0cm,-0.1cm) rectangle (0.3cm,0.15cm);},
    ]

\addplot [pattern color=white, preaction={fill, NSCcol1}, pattern={north west lines}] coordinates
    {(0-25\%,3.40) (25-50\%,3.73) (50-75\%,4.00) (75-100\%,5.34)};
\addplot [pattern color=white, preaction={fill, teal}, pattern={crosshatch dots}] coordinates
    {(0-25\%,3.18) (25-50\%,3.31) (50-75\%,3.83) (75-100\%,5.07)};
\addplot [pattern color=white, preaction={fill, mycyan}] coordinates
    {(0-25\%,4.10) (25-50\%,4.35) (50-75\%,4.54) (75-100\%,6.07)};
\addplot [pattern color=white, preaction={fill, myblue}, pattern={crosshatch}] coordinates
    {(0-25\%,4.74) (25-50\%,5.30) (50-75\%,5.44) (75-100\%,6.61)};
\end{axis}
\end{tikzpicture}\hspace{1mm}\label{fig:more-group-ndcg}%
}%
\vspace{-1em}
\end{small}
\caption{Performance breakdown by test user frequency.} \label{fig:subgroup}
\vspace{-2ex}
\end{figure}
\begin{figure}[!t]
\centering
\begin{small}
\begin{tikzpicture}
    \begin{customlegend}
    [legend columns=5,
        legend entries={\algo{},\texttt{GraphHash}},
        legend style={at={(0.45,1.35)},anchor=north,draw=none,font=\footnotesize,column sep=0.2cm}]
    \addlegendimage{line width=0.7mm,mark=*,mark size=1.5pt,color=mycyan}
    \addlegendimage{line width=0.7mm,mark=*, mark size=1.5pt,color=NSCcol1}
    \end{customlegend}
\end{tikzpicture}
\\[-\lineskip]
\vspace{-4mm}
\subfloat[{\em Recall}]{
\begin{tikzpicture}[scale=1,every mark/.append style={mark size=1.3pt}]
    \begin{axis}[
        height=\columnwidth/2.6,
        width=\columnwidth/2.0,
        ylabel={\it R@20},
        xmin=1.8, xmax=8.2,
        ymin=15, ymax=17,
        xtick={2,3,4,5,6,7,8},
        ytick={15.0,15.4,15.8,16.2,16.6,17.0},
        xticklabel style = {font=\scriptsize},
        yticklabel style = {font=\footnotesize},
        xticklabels={4,5,6,7,8,9,10},
        yticklabels={15.0,15.4,15.8,16.2,16.6,17.0},
        every axis y label/.style={font=\footnotesize,at={(current axis.north west)},right=2mm,above=0mm},
        legend style={fill=none,font=\small,at={(0.02,0.99)},anchor=north west,draw=none},
    ]
    \addplot[line width=0.4mm, smooth, mark=*, color=mycyan]  %
        plot coordinates {
            (2,	16.27)
            (3,	16.31)
            (4,	16.57)
            (5,	16.58)
            (6,	16.67)
            (7, 16.74)
            (8,	16.86)
        };
    \addplot[line width=0.4mm, smooth, mark=*, color=NSCcol1]  %
        plot coordinates {
            (2,	15.61)
            (3,	15.47)
            (4,	15.50)
            (5, 15.52)
            (6,	15.61)
            (7,	15.14)
            (8,	15.33)
        };

    \end{axis}
\end{tikzpicture}\hspace{2mm}\label{fig:gamma-R}%
}
\subfloat[{\em NDCG}]{
\begin{tikzpicture}[scale=1,every mark/.append style={mark size=1.3pt}]
    \begin{axis}[
        height=\columnwidth/2.6,
        width=\columnwidth/2.0,
        ylabel={\it N@20},
        xmin=1.8, xmax=8.2,
        ymin=9.2, ymax=11,
        xtick={2,3,4,5,6,7,8},
        ytick={9.2, 9.5,9.8,10.1,10.4,10.7,11.0},
        xticklabel style = {font=\scriptsize},
        yticklabel style = {font=\footnotesize},
        xticklabels={4,5,6,7,8,9,10},
        yticklabels={9.2, 9.5,9.8,10.1,10.4,10.7,11.0},
        every axis y label/.style={font=\footnotesize,at={(current axis.north west)},right=2mm,above=0mm},
        legend style={fill=none,font=\small,at={(0.02,0.99)},anchor=north west,draw=none},
    ]
    \addplot[line width=0.4mm, smooth, mark=*, color=mycyan]  %
        plot coordinates {
            (2,	10.46)
            (3,	10.61)
            (4,	10.70)
            (5,	10.69)
            (6,	10.70)
            (7, 10.71)
            (8,	10.74)
        };
    \addplot[line width=0.4mm, smooth, mark=*, color=NSCcol1]  %
        plot coordinates {
            (2,	9.85)
            (3,	9.85)
            (4,	9.81)
            (5,	9.74)
            (6,	9.73)
            (7, 9.50)
            (8,	9.56)
        };

    \end{axis}
\end{tikzpicture}\hspace{0mm}\label{fig:vary-Gow}%
}

\end{small}
 \vspace{-3mm}
\caption{Impact of resolution paramater $\gamma$.} \label{fig:vary-gamma}
\vspace{-2ex}
\end{figure}

\subsection{User Subgroup Evaluation}
This experiment investigates the efficacy of algorithms across user groups, which are categorized by their activity frequency percentiles in the training data. In the Figure~\ref{fig:subgroup}, we report the average metrics of each degree subgroup for both the top-performing baselines and the full model. All methods follow the trend of the full model and perform better with power users. Notably, \algo{}, which achieves the best overall performance, substantially mitigates the shortcomings of existing algorithms with respect to tail users. These observations indicate that substantial improvements can still be made for low-frequency users, and our framework provides a potential solution to address this challenge.

\subsection{Parameter Analysis}

In Figure~\ref{fig:vary-gamma}, we vary $\gamma$ in the range $[4, 10]$ on {\em Gowalla} , where the compression ratio parameter spans $[1/10, 1/5]$. \algo{} consistently surpasses \texttt{GraphHash} across all resolutions. Furthermore, increasing the resolution(with a corresponding increase in the embedding table size) leads to an improvement in Recall for \algo{}, whereas \texttt{GraphHash} fails to achieve further performance gains. Regarding NDCG, \algo{} maintains stable performance, while \texttt{GraphHash} demonstrates a decline, due to its coarse clustering strategy.

\subsection{Clustering Result Analysis}
\begin{figure}[!ht]
    \centering
    \begin{tikzpicture}
        \begin{customlegend}[
            legend entries={user, item, total},
            legend columns=3,
            area legend,
            legend style={at={(0.45,1.15)},anchor=north,draw=none,font=\footnotesize,column sep=0.25cm}
        ]
            \addlegendimage{color=NSCcol1, line legend, solid, very thick}
            \addlegendimage{color=NSCcol3, line legend, solid, very thick}
            \addlegendimage{color=NSCcol5, line legend, solid, very thick}
            
        \end{customlegend}
    \end{tikzpicture}
    \\[-\lineskip]
    \vspace{-4mm}
    \subfloat[{\em GraphHash}]{
        \includegraphics[width=0.33\columnwidth, trim={0 10 0 15}, clip]{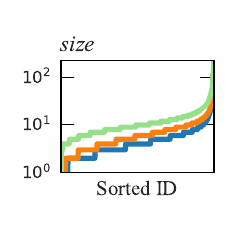}
    }
    \subfloat[{\em Leiden}]{%
        \includegraphics[width=0.33\columnwidth, trim={0 10 0 15}, clip]{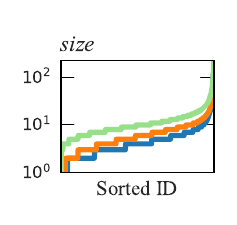}
    }
    \subfloat[{\em BACO}]{%
        \includegraphics[width=0.33\columnwidth, trim={0 10 0 15}, clip]{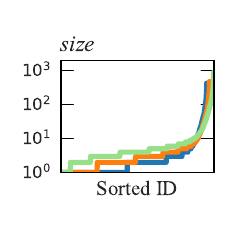}
    }
    \vspace{-2ex}
    \caption{Cluster size distributions of \texttt{GraphHash}, \texttt{Leiden}, \algo{}.}
    \label{fig:cluster-size-dist}
    \vspace{-2ex}
\end{figure}

We further examine the differences in clustering between \algo{} and strong baselines by analyzing cluster size distribution and embedding distance.

As illustrated in Figure~\ref{fig:cluster-size-dist}, \algo{} exhibits a more heterogeneous cluster size distribution compared to \texttt{GraphHash} and \texttt{Leiden}, with a greater prevalence of both small and large clusters. This pattern arises because our method groups low-degree nodes into large clusters, facilitating mutual information sharing, while assigning high-degree nodes to smaller or singleton clusters to minimize conflicts. In contrast, other methods tend to produce clusters of more uniform size, thereby failing to fully leverage the inherent graph structure.

\begin{table}[!ht]
\centering
\caption{Average distance of full embeddings and codebooks.}
\label{tab:distance}
\vspace{-2ex}
\begin{small}
\renewcommand{\arraystretch}{0.9}
\addtolength{\tabcolsep}{-0.4em}
\begin{tabular}{c|*{4}{ccc}}
\toprule
\multirow{2}{*}{\makecell{Method}} & 
\multicolumn{3}{c}{Gowalla} & 
\multicolumn{3}{c}{Yelp2018} \\
 \cmidrule(lr){2-4} \cmidrule(lr){5-7} 
\multicolumn{1}{c|}{}  & \multicolumn{1}{c}{user} & \multicolumn{1}{c}{item}  & \multicolumn{1}{c}{all} & \multicolumn{1}{c}
{user} & \multicolumn{1}{c}{item}  & \multicolumn{1}{c}{all} \\
\midrule
 \texttt{GraphHash}
& 5.444 & 4.976 & 5.174 
& 5.291 & 4.894 & 5.074 \\   

 \texttt{Leiden}
& 5.401 & 4.951 & 5.141 
& 5.449 & 4.970 & 5.188 \\   

 \texttt{SCC}
& 5.405 & 4.872 & 5.097 
& 5.038 & 4.794 & 4.905\\   

\cline{1-7}

\algo{} w/o SCU
& 5.069 & 4.909 & 4.977 
& 5.180 & 4.847 & 4.998\\

\algo{}
& 4.854 & 4.980 & 4.927 
& 4.751 & 4.982 & 4.877\\
\bottomrule
\end{tabular}
\end{small}
\end{table}

Table~\ref{tab:distance} presents the average distances between the hashing embeddings and the full model embeddings for users, items, and the combined set. The results indicate that embeddings produced by \algo{} are more closely aligned with those of the full model, corroborating its superior performance as reported in Table~\ref{tab:distance}. Moreover, the SCU strategy in \algo{} substantially reduces the user-side distance, with only minimal impact on item-side conflicts. Nevertheless, the overall improvement for both users and items is more significant.

\subsection{Additional Large-scale Datasets}
\begin{table}[!ht]
\renewcommand{\arraystretch}{0.9}
\centering
\caption{Summary statistics about large-scale datasets.}
\label{tab:large-data}
\vspace{-2ex}
\begin{small}
\begin{tabular}{c|cccc}
\hline
{\bf Dataset} & {\bf \#Users} & {\bf \#Items} & {\bf \#Interactions} & {\bf Density} \\ \hline
MovieLens & 200,808	& 65,032 & 20,228,336 & 0.155\% \\
SteamGame & 2,567,538 & 15,474 & 7,793,069 & 0.020\% \\
\hline
\end{tabular}
\end{small}
\vspace{-1ex}
\end{table}

\begin{table}[!ht]
\centering
\caption{Performance comparison on large-scale datasets.}
\label{tab:large-performace}
\vspace{-2ex}
\begin{small}
\renewcommand{\arraystretch}{0.9}
\addtolength{\tabcolsep}{-0.4em}
\begin{tabular}{c|*{2}{cccc}}
\toprule
\multirow{2}{*}{\makecell{Method}} & 
\multicolumn{3}{c}{MovieLens} & 
\multicolumn{3}{c}{SteamGame} \\
 \cmidrule(lr){2-4} \cmidrule(lr){5-7} 
\multicolumn{1}{c|}{}  & \multicolumn{1}{c}{Param\textdownarrow} & \multicolumn{1}{c}{R@20\textuparrow}  & \multicolumn{1}{c}{N@20\textuparrow} & \multicolumn{1}{c}{Param\textdownarrow} & \multicolumn{1}{c}{R@20\textuparrow}  & \multicolumn{1}{c}{N@20\textuparrow} \\
\midrule
 \texttt{Full Model}
& 17.0M & 25.980 & 20.294 
& 165.3M & 8.004 & 3.752 \\   
\cline{1-7}
 \texttt{GraphHash}
& 2.26M & 20.631 & 14.669 
& 22.6M & 6.354 & 2.821 \\   

 \texttt{Leiden}
& 2.26M & 20.878 & 14.778 
& 22.6M & 6.439 & 2.830\\    

\cline{1-7}

\algo{}
& 2.19M & 21.362 & 14.802 
& 21.7M & 6.912 & 3.111\\
v.s. Baselines
& - & +0.484 & +0.024 
& - & +0.473 & +0.281\\
v.s. Full Model
& -87.1\% & -4.618 & -5.492 
& -86.9\% & -1.092 & -0.641\\
\bottomrule
\end{tabular}
\end{small}
\end{table}

We further evaluate \algo{} on {\em MovieLens} and {\em SteamGame}, two large-scale datasets with 20M interaction edges and 2M nodes, respectively, as shown in Table~\ref{tab:large-data}.
We select the three optimal baselines, namely \texttt{GraphHash}, \texttt{Leiden}, and \texttt{SCC}, for performance evaluation on large-scale datasets. Note that since \texttt{SCC} relies on the costly SVD technique, it fails to finish running within 10 hours. Therefore, we only present the results of \texttt{GraphHash}, \texttt{Leiden}, and \algo{}. 

As shown in Table~\ref{tab:large-performace}, \algo{} overall outperforms all the baselines with fewer embedding table parameters. Compared to the best baselines, \algo{} achieves improvement of over 0.4\% in Recall@20 across both datasets. In terms of speed, \algo{} saves 25\% of the running time relative to \texttt{Leiden} on {\em MovieLens} data. On {\em SteamGame}, \algo{} runs 1.5s longer than \texttt{Leiden}, yet it achieved relative improvements of 7.3\% and 9.9\% in Recall@20 and NDCG@20, respectively. The reason lies in our compact weighting schemes, which are tailored to the recommendation task.

\end{document}